\documentclass[11pt,english]{article}

\newcommand{\ARXIV}[1]{#1}
\newcommand{\ICALP}[1]{}

\usepackage{amsmath,amssymb,amsthm}
\usepackage{multicol}
\usepackage[table]{xcolor}
\usepackage[margin=1in]{geometry}
\usepackage{graphicx,color}
\usepackage{enumitem}
\usepackage{fullpage}
\usepackage[noblocks]{authblk}
\usepackage{tcolorbox}
\usepackage{babel}
\usepackage{wrapfig}
\usepackage{MnSymbol}
\usepackage{mdframed}
\usepackage{mathtools}
\usepackage{floatrow}
\usepackage{multirow}
\usepackage{booktabs}
\usepackage{palatino}


\usepackage[ruled,linesnumbered,vlined]{algorithm2e}
\usepackage{tablefootnote}
\usepackage{thm-restate}
\usepackage{bbding}
\usepackage{pifont}
\usepackage{nicefrac}
\DeclareMathOperator{\polylog}{polylog}
\DeclareMathOperator{\poly}{poly}

\newcounter{magicrownumbers}

\graphicspath{{./figures/}}

\usepackage{tablefootnote}

\definecolor{Darkblue}{rgb}{0,0,0.4}
\definecolor{Brown}{cmyk}{0,0.61,1.,0.60}
\definecolor{Purple}{cmyk}{0.45,0.86,0,0}
\definecolor{Darkgreen}{rgb}{0.133,0.543,0.133}

\usepackage[colorlinks,linkcolor=Darkblue,filecolor=blue,citecolor=blue,urlcolor=Darkblue,pagebackref]{hyperref}
\usepackage[nameinlink]{cleveref}

\usepackage[colorinlistoftodos,prependcaption,textsize=tiny]{todonotes}

\newcommand{\namedref}[2]{\hyperref[#2]{#1~\ref*{#2}}}
\newcommand{\propref}[1]{\hyperref[#1]{property~(\ref*{#1})}}

\newcommand{\eps}{\varepsilon}
\newcommand{\R}{\mathbb{R}}
\newcommand{\OO}{\widetilde{O}}
\newcommand{\CC}{\mathcal{C}}
\newcommand{\DS}{\mathcal{DS}}
\newcommand{\AAA}{\mathcal{A}}
\newcommand{\OPT}{\mbox{\rm OPT}}
\newcommand{\IGNORE}[1]{}

\newtheorem*{theorem*}{Theorem}
\newtheorem{theorem}{Theorem}
\newtheorem{lemma}{Lemma}

\newtheorem{corollary}{Corollary}

\newtheorem{problem}{Problem}
\newtheorem*{question*}{Question}

\newtheorem*{conjecture*}{Conjecture}

\newcommand{\old}[1]{{}}

\hypersetup{
    colorlinks=true,
    linkcolor=blue,
    filecolor=violet,  
    citecolor=violet,
    urlcolor=cyan,
    pdfpagemode=FullScreen,
}

\title{Fully Dynamic Geometric Vertex Cover and Matching}

\date{}

 \author{Sujoy Bhore\thanks{Department of Computer Science \& Engineering, Indian Institute of Technology Bombay, Mumbai, India.\\ Email: sujoy@cse.iitb.ac.in}
 \quad
 Timothy M. Chan\thanks{Department of Computer Science, University of Illinois at Urbana-Champaign. Email: tmc@illinois.edu.  Work supported in part by
 NSF Grant CCF-2224271.}
 }

\begin{document}
\maketitle

\begin{abstract}
In this work, we study two fundamental graph optimization problems, minimum vertex cover (MVC) and maximum-cardinality matching (MCM), for intersection graphs of geometric objects, e.g., disks, rectangles, hypercubes, etc., in $d$-dimensional Euclidean space. We consider the problems in fully dynamic settings, allowing insertions and deletions of objects. 

We develop a general framework for dynamic MVC in intersection graphs, achieving sublinear amortized update time for most natural families of geometric objects. 
In particular, we show that -

\begin{itemize}
    \item For a dynamic collection of disks in $\mathbb{R}^2$ or hypercubes in $\mathbb{R}^d$ (for constant $d$), it is possible to maintain a $(1+\varepsilon)$-approximate vertex cover in $\polylog$ amortized update time.
    These results also hold in the bipartite case. 

    \item For a dynamic collection of rectangles in $\mathbb{R}^2$, it is possible to maintain a 
    $(\frac{3}{2}+\varepsilon)$-approximate vertex cover in $\polylog$ amortized update time.
\end{itemize}
Along the way, we obtain the first near-linear time static algorithms for MVC in the above two cases with the same approximation factors.

Next, we turn our attention to the MCM problem.
Although our MVC algorithms automatically
allow us to approximate the size of the MCM in bipartite geometric intersection graphs, they do not produce a
matching. We give another general framework to maintain an approximate maximum matching, and 
further extend the approach to handle non-bipartite intersection graphs.
In particular, we show that -

\begin{itemize}
     \item For a dynamic collection of (bichromatic or monochromatic) disks in $\mathbb{R}^2$ or hypercubes in $\mathbb{R}^d$ (for constant $d$), it is possible to maintain a $(1+\varepsilon)$-approximate matching in $\polylog$ amortized update time. 
     
\end{itemize}

\end{abstract}


\section{Introduction}


Dynamic geometric algorithms form a vital area of research in Computational Geometry.
In dynamic settings, the input is subject to discrete changes over time, that is, insertions and deletions of geometric objects.
Over the years,
dynamic algorithms have been developed 
for a variety of geometric problems, including
convex hull~\cite{OvermarsL81, Chan01, BrodalJ02, Chan10, Chan20a}, 
closest/farthest pair~\cite{Eppstein95,Chan20},
connectivity~\cite{Chan06, AfshaniC09, ChanPR11, KaplanKKKMRS22}, 
width and other measures~\cite{Chan01,Chan03b,Chan20a},
coresets~\cite{Chan09},
and many others.  
More recently, much attention has turned towards {\sf NP}-hard geometric optimization problems---building dynamic data structures that can maintain solutions for an optimization task at hand, in truly sublinear update time. 
Since it is only possible to maintain some approximate solution for an {\sf NP}-hard problem, the broad goal is to explore the trade-offs between update time and approximation ratio. Optimization problems that are solvable in polynomial time but do not have  subquadratic exact algorithms may also benefit from allowing approximation.  Several fundamental geometric optimization problems have been studied in dynamic settings over the last few years, for instance, set cover~\cite{AgarwalCSXX22,ChanH21,ChanHSX22,KLRSW23}, piercing or hitting set~\cite{AgarwalCSXX22,AHRS23}, and
independent set~\cite{bhore2020dynamic, Henzinger0W20, BLN22, BNTW23}.

In this work, we focus on two other fundamental optimization problems: minimum vertex cover (MVC) and maximum-cardinality matching (MCM\@). Let $G=(V, E)$ be an undirected graph. A subset of vertices $S\subseteq V$ is a vertex cover if each edge has at least one endpoint in $S$. An MVC refers to a vertex cover with the minimum cardinality.
A simple algorithm that greedily computes a maximal matching\footnote{A matching in an input graph $G$ is a subset of edges $M\subseteq E$ such that no two edges in $M$ share a common endpoint. A matching $M$ is maximal if for every edge $(u,v)\in E\setminus M$,  either $u$ or $v$ is matched in $M$.} and outputs its vertices gives a $2$-approximation for MVC.  This is likely to be the best possible for general graphs, as the problem is hard to approximate within factor $2-o(1)$ 
under the unique games conjecture~\cite{KhotR08}.

MVC stands out as one of the most important problems, due to its connection to other fundamental problems.
For example, in the exact setting, the
maximum independent set (MIS) problem is equivalent to MVC, as the complement of an MIS
is an MVC\@. However, from the approximation perspective, a sharp dichotomy exists between these problems; for instance,  MIS on general graphs is much harder to approximate\footnote{No polynomial time algorithm can achieve an approximation factor $n^{1-\varepsilon}$, 
unless $\textsf{P}=\textsf{ZPP}$~\cite{Zuckerman07}.  Throughout this paper, $\eps$ denotes an arbitrarily small positive constant.}. Another closely related classical optimization on graphs is MCM, where the objective is to find a matching with the largest number of edges. The size of an MVC is obviously at least as large as the size of an MCM; for bipartite graphs, they are well-known to be equal.
MCM is polynomial-time solvable:
the classical algorithm by Hopcroft and Karp~\cite{HopcroftK73} runs in $O(m\sqrt{n})$ time
for bipartite graphs  with $n$ vertices and $m$ edges,
and Vazirani's algorithm~\cite{Vazirani94} achieves
the same run time for general graphs. By recent breakthrough results~\cite{ChenKLPGS22}, MCM can be solved in $m^{1+o(1)}$ time for bipartite graphs. 
Earlier, Duan and Pettie~\cite{DuanP10} obtained $O(m)$-time $(1+\eps)$-approximation algorithms for general graphs.

\paragraph{Geometric MVC \& MCM.} We focus on the intersection graph of a collection $\mathcal{L}=\{\ell_1,\ldots,\ell_n\}$ of geometric objects in $d$-dimensional Euclidean space $\R^d$ (for $d$ constant). In the intersection graph $G_{\mathcal{L}} = (V_{\mathcal{L}},E_{\mathcal{L}})$ of $\mathcal L$, each object $\ell_i\in\mathcal{L}$ is represented by a vertex $v_i\in V_{\mathcal{L}}$, and any pair of intersecting objects $\ell_i,\ell_j\in \mathcal{L}$ 
corresponds to an edge $(v_i,v_j)\in E_{\mathcal{L}}$. A wide range of optimization problems have been studied over the years for various families of geometric intersection graphs,
and better results are often possible in such geometric graphs.
Concerning MVC, the problem remains \textsf{NP}-hard, even for intersection graphs of unit disks~\cite{clark1990unit}, but there exist \textsf{PTAS}s for intersection graphs of disks, squares, or other ``fat'' objects in any constant dimension~\cite{erlebach2005polynomial}.
For rectangles\footnote{All rectangles and boxes are axis-aligned by default in this paper.} in the plane,
Bar-Yehuda, Hermelin, and Rawitz~\cite{bar2011minimum}
obtained a $(\frac{3}{2}+\varepsilon)$-approximation algorithm.
Bar-Yehuda et al.'s work
made use of Nemhauser and Trotter's standard LP-based kernelization for vertex cover~\cite{nemhauser1975vertex}, which allows us to approximate the MVC by flipping to an MIS instance.
Following the same kernelization approach,
Har-Peled~\cite{SarielVC23} noted that by using
the known \textsf{QQPTAS}\footnote{We refer to a $(1 + \varepsilon)$-approximation algorithm with running time of the form $n^{O(\polylog n)}$ or 
$n^{O(\poly(\log \log n))}$, for any fixed $\eps>0$, as a \textsf{QPTAS} or \textsf{QQPTAS} respectively.} for MIS for rectangles~\cite{ChuzhoyE16}, one can immediately obtain a \textsf{QQPTAS} for MVC for rectangles.
Similarly, by using the known \textsf{QPTAS}
for MIS for  polygons in the plane~\cite{AdamaszekHW19}, one can obtain 
a {\sf QPTAS} for MVC for  polygons.


MCM on geometric intersection graphs has also been studied.
Efrat, Itai, and Katz~\cite{efrat2001geometry} showed how to compute the maximum matching in bipartite unit disk graphs in $O(n^{3/2} \log n)$ time. Their algorithm works for other geometric objects; for example, it runs in $O(n^{3/2} \polylog n)$ time
for bipartite intersection graphs of arbitrary disks,
by using known dynamic data structures for disk intersection searching~\cite{KaplanKKKMRS22}.
Bonnet, Cabello, and Mulzer~\cite{bonnet2023maximum} improved
the running time to $O(n^{\omega/2})$ for certain cases, e.g., translates of a convex object, or
geometric objects with low ``density'',
where $\omega<2.38$ is the matrix multiplication exponent.
Recently, Har-Peled and Yang~\cite{Har-PeledY22} presented near-linear time 
$(1+\varepsilon)$-approximation algorithms for MCM in (bipartite or non-bipartite) intersection graphs of arbitrary disks, among other things. 

\paragraph{Dynamic geometric MVC \& MCM.}  In this paper,
we are interested in the dynamic version of MVC and MCM problems, for geometric intersection graphs, subject to insertions and deletions of objects in $\mathcal{L}$. The goal is to obtain a data structure that can efficiently maintain an approximate MVC or MCM in $G_{\mathcal{L}}$ at any point in time, with truly sublinear update times for both insertions and deletions. 

\begin{tcolorbox}
{\begin{problem}\label{prob1}
    Given a dynamic collection of geometric objects in  $\R^d$, is it possible to maintain a $(1+\varepsilon)$-approximate MVC in truly sublinear update time? 
\end{problem}}
\end{tcolorbox}


\begin{tcolorbox}
{\begin{problem}\label{prob2}
    Given a dynamic collection of geometric objects in  $\R^d$, is it possible to maintain a $(1+\varepsilon)$-approximate MCM in truly sublinear update time? 
\end{problem}}
\end{tcolorbox}

Dynamic MVC \& MCM are well-studied problems in the dynamic graph algorithms literature, under edge updates; see \cite{OnakR10, GuptaP13, BernsteinS15, BernsteinS16, PelegS16, Solomon16, BhattacharyaHI18, BhattacharyaK19, Behnezhad23, BhattacharyaKSW23, AzarmehrBR24} for a non-exhaustive list.
However, none of these results particularly apply to our scenario, since the insertion/deletion of a single object corresponds to a vertex update, which may require many edge updates.
A natural barrier for designing dynamic graph algorithms under vertex updates is the maximum degree of the input graph, which may be $\Omega(n)$. For geometric intersection graphs, however, the challenge is to break such barriers by not explicitly maintaining the graph itself, which may have $m=\Omega(n^2)$ size in the worst case.

Although geometric versions of dynamic set cover, dynamic hitting set, and dynamic independent set have received much attention recently (as we have mentioned), we are not aware of any prior work on dynamic geometric vertex cover or dynamic geometric matching.

For  unit disks in the plane, or more generally, fat objects of roughly the same size in $\mathbb{R}^d$ (for constant $d$), it is not difficult to maintain a $(1+\varepsilon)$-approximate MVC in constant update time, by using the standard shifted-grid technique~\cite{hochbaum1985approximation}. Moreover, this technique extends to  maintain a $(1+\varepsilon)$-approximate non-bipartite  MCM in 
constant update time (see also \cite{Har-PeledY22} in the static setting). 
However, the problems become much more challenging for more general families of objects, such as disks of arbitrary radii, or arbitrary (non-fat) rectangles.





In this work, we resolve Problems~\ref{prob1} and~\ref{prob2} in an affirmative sense for 
many types of geometric objects:

\paragraph{Our contributions to MVC.}


We develop a general framework for dynamic geometric MVC (Theorem~\ref{thm:main}), which implies
all the specific results listed in Table~\ref{table1}.





\begin{table}[ht]
\begin{tabular}{llll}
\toprule

\multicolumn{1}{c}{Objects}  & \multicolumn{1}{c}{Approx. Ratio} & \multicolumn{1}{c}{Amortized update time} & \multicolumn{1}{c}{Reference} \\ 

\midrule

\rowcolor[gray]{.9} Disks in $\mathbb{R}^2$ & $1+\varepsilon$ & $O(2^{O(1/\eps^2)}\log^{O(1)}n)$  & Corollary~\ref{disk:MVC} \\ 

Rectangles in $\mathbb{R}^2$ & $\frac{3}{2}+\eps$ & $O(2^{O(1/\eps^2)}\log^{O(1)}n)$ & Corollary~\ref{cor:rect} \\ 

\rowcolor[gray]{.9}  Fat boxes in $\mathbb{R}^d$  & $1+\varepsilon$ & $O(2^{O(1/\eps^d)}\log^{O(1)}n)$ & Corollary~\ref{cor:fat:MVC} \\ 

Bipartite disks in $\mathbb{R}^2$ & $1+\varepsilon$ & $O((1/\eps^7)\log^{O(1)}n)$ & Corollary~\ref{MVC:bipartite:disk} \\ 

\rowcolor[gray]{.9} Bipartite boxes (non-fat) in $\mathbb{R}^d$ & $1+\eps$ & $O((1/\eps^7)\log^{O(1)}n)$ & Corollary~\ref{MVC:bipartite:box} \\ 

\bottomrule
\end{tabular}
\caption{\small Summary of results on dynamic MVC for intersection graphs of geometric objects.\label{table1}}
\end{table}

These results on geometric vertex cover are notable when compared against recent results on related dynamic geometric problems such as set cover and independent set:
\begin{itemize}
\item The \emph{polylogarithmic} update times in Table~\ref{table1} are significantly better than known results for dynamic geometric set cover.  For example, even for the simple case of squares in $\R^2$, the best update time for set cover was near $\sqrt{n}$~\cite{ChanHSX22}; for the case of disks in $\R^2$, the best update time was near $n^{12/13}$~\cite{ChanH21} (and was only for maintaining the approximate size but not a cover).
\item Many of the results in Table~\ref{table1} achieve the \emph{ideal
approximation ratio}, namely, $1+\varepsilon$.  In contrast, previous
dynamic algorithms for geometric set cover~\cite{ChanH21,ChanHSX22} were all in the regime of $O(1)$ approximation ratio.  Similarly, for dynamic geometric MIS, recent algorithms for squares or hypercubes~\cite{Henzinger0W20} and disks~\cite{BNTW23} all achieved $O(1)$ approximation only.
\item Our framework is \emph{general} and applicable to more types of objects.
Though not stated in the table, we can handle any class of semialgebraic objects with constant description complexity in sublinear time with approximation ratio $2+\varepsilon$ (or $1+\eps$ in the bipartite cases).  In contrast, the known results for dynamic geometric set cover~\cite{ChanH21,ChanHSX22} were limited to certain specific types of objects; for example, they did not extend to balls in $\R^3$ or rectangles in $\R^2$.  There were no nontrivial results for dynamic MIS for rectangles with constant approximation ratio.
\end{itemize}

A prerequisite to a $(1+\varepsilon)$-approximate dynamic algorithm with polylogarithmic update time is the existence of a near-linear-time \emph{static} algorithm.
MIS for disks (or fat objects)  admits static {\sf PTAS}s~\cite{erlebach2005polynomial,Chan03}, but not an {\sf EPTAS}\footnote{An {\sf EPTAS} refers to a {\sf PTAS} with running time $O(n^c)$ for a constant $c$ independent of $\eps$.} under standard hypotheses in parameterized complexity~\cite{Marx05}; 
similarly, set cover for disks admits static {\sf PTAS}s~\cite{MustafaR10} but not an {\sf EPTAS}~\cite{Marx05}.
In contrast, an {\sf EPTAS} for MVC for disks (or fat objects) can be obtained by known techniques using the aforementioned Nemhauser--Trotter kernel~\cite{nemhauser1975vertex}.  This explains why in some sense, geometric vertex cover is actually a better problem to study in dynamic settings than geometric independent set or  geometric set cover.

However, a static {\sf EPTAS}
for MVC for disks (or fat objects) that runs in \emph{near linear time} has not been explicitly given before, to the best of our knowledge.
Similarly, the static $(\frac32 +\eps)$-approximation algorithm
for MVC for rectangles by Bar-Yehuda, Heremelin, and Rawitz~\cite{bar2011minimum} runs in polynomial time but not in near linear time.  In deriving our dynamic results, we obtain
the first static near-linear-time algorithms in these cases, which may be of independent interest.

\old{
First, we show that, for a dynamic collection of disks of arbitrary radii in $\mathbb{R}^2$, we show that it is possible to maintain $(1+\eps)$-approximate MVC in $\polylog$ amortized update time (Corollary~\ref{disk:MVC}). 

Next, we show that, for a dynamic collection of rectangles in $\mathbb{R}^2$, it is possible to maintain $(\frac{3}{2}+\varepsilon)$-approximate MVC in $\polylog$ amortized update time (Corollary~\ref{cor:rect}). Moreover, we show that, for fat axis-aligned boxes in $\mathbb{R}^d$ for any constant $d$, it is possible to maintain $(1+\varepsilon)$-approximate MVC in $\polylog$ amortized update time (Corollary~\ref{cor:fat:MVC}). 

Finally, we extend the results to the bipartite version as well. 
We show that, for a bipartite set of disks in $\mathbb{R}^2$, we can maintain $(1+\varepsilon)$-approximate MVC in $\polylog$-amortized update time (Corollary~\ref{MVC:bipartite:disk}). 
For the case of a bipartite intersection graph between two sets of (not necessarily fat) boxes in $\mathbb{R}^d$, we show that it is possible to maintain $(1+\varepsilon)$-approximate MVC in $\polylog$-amortized update time (Corollary~\ref{MVC:bipartite:box}). 
}


\paragraph{Our contributions to MCM.}
Next, we focus on MCM\@. 
Our approximation algorithms for MVC in
bipartite geometric intersection graphs allows us to automatically approximate the size of the MCM in such bipartite graphs. However, it does not compute a matching. 
We develop another general framework for dynamic geometric MCM  (Theorem~\ref{thm:main:mcm}), and further extend it to non-bipartite intersection graphs.
The framework implies the specific results in Table~\ref{table2}:



\begin{table}[ht]
\begin{tabular}{llll}
\toprule

\multicolumn{1}{c}{Objects}  & \multicolumn{1}{c}{Approx. Ratio} & \multicolumn{1}{c}{Amortized update time} & \multicolumn{1}{c}{Reference} \\ 

\midrule

\rowcolor[gray]{.9} Bipartite  disks in $\mathbb{R}^2$ & $1+\varepsilon$ & $O((1/\eps^3)\log^{O(1)}n)$  & Corollary~\ref{cor:mcm:disk} \\ 

Bipartite boxes in $\mathbb{R}^d$ & $1+\eps$ & $O((1/\eps^3)\log^{O(1)}n)$ & Corollary~\ref{cor:mcm:box} \\ 

\rowcolor[gray]{.9} Disks in $\mathbb{R}^2$  & $1+\varepsilon$ & $O(2^{O(1/\eps)}\log^{O(1)}n)$  & Corollary~\ref{cor:mcm:disk2} \\ 

Boxes in $\mathbb{R}^d$ & $1+\varepsilon$ & $O(2^{O(1/\eps)}\log^{O(1)}n)$  & Corollary~\ref{cor:mcm:box2} \\ 

\bottomrule
\end{tabular}
\caption{\small Summary of results on dynamic MCM for intersection graphs of geometric objects.\label{table2}}
\end{table}

In particular, we succeed in efficiently dynamizing the known static near-linear-time $(1+\eps)$-approximation MCM algorithms for (bipartite or non-bipartite) disks by Har-Peled and Yang~\cite{Har-PeledY22}.

\old{
we show that for a bipartite set of disks in $\mathbb{R}^2$, 
it is possible to maintain $(1+\varepsilon)$-approximate MCM in $\polylog$-amortized update time (Corollary~\ref{cor:mcm:disk}). 
Moreover, for a bipartite set of axis-aligned boxes in $\mathbb{R}^d$, we can maintain $(1+\varepsilon)$-approximate MCM in $\polylog$-amortized update time (Corollary~\ref{cor:mcm:box}).
Finally, we adapt our results from the bipartite MCM
case to the general MCM. Specifically, we show that similar guarantees can be obtained for disks in $\mathbb{R}^2$ (Corollary~\ref{cor:mcm:disk2}) and axis-aligned boxes in $\mathbb{R}^d$ (Corollary~\ref{cor:mcm:box2}). 
}




\paragraph{Our framework and techniques for MVC.}
Our framework for MVC has only 2 requirements on the geometric object family: 
\begin{enumerate}
    \item[(i)]~the existence of an efficient data structure for \emph{dynamic intersection detection} (maintaining a set of input objects under insertions and deletions so that we can quickly decide whether a query object intersects any of the input objects), and 
    \item[(ii)]~the existence of an efficient \emph{static approximation algorithm}.
\end{enumerate}
If the data structure in (i) takes polylogarithmic query and update time and the algorithm in (ii) takes near linear time ignoring polylogarithmic factors, then the resulting dynamic algorithm achieves polylogarithmic (amortized) update time, and its approximation ratio
is the same as the static algorithm, up to $1+\eps$ factors.

In some sense, the framework is as general as possible:  Requirement (ii) is an obvious prerequisite (as we have already noted); in fact, we only require a static algorithm for the case when the optimal value is promised to be at least $n/2$ roughly.
Requirement (i) is also natural, as some kind of geometric range searching data structures is provably necessary.   For example, consider the case of disks in $\R^2$.  Any dynamic approximate MVC algorithm needs to recognize whether the MVC size is zero, and so must
know whether the intersection graph is empty.   Dynamic disk range emptiness (maintaining a set of input points under insertions and deletions so that we can quickly decide whether a query disk contains any of the input points) can be reduced to this problem, by inserting all the input points, and repeatedly inserting a query disk and deleting it.  (This explains why for our result on disks, we need multiple logarithmic factors in the update time, since the best algorithm for dynamic disk range emptiness requires multiple logarithmic factors~\cite{Chan10}, although the current best algorithm for dynamic disk intersection detection requires still more logarithmic factors~\cite{KaplanKKKMRS22}.)

Our framework for MVC is established by combining a number of ideas.  The initial idea is a standard one: \emph{periodic rebuilding}.  The intuition is that when the optimal value is large, we can afford to do nothing for a long period of time, without hurting the approximation ratio by much.  This idea has been used before in a number of recent works on dynamic geometric independent set (e.g., \cite{CardinalIK21}),
dynamic piercing (e.g., \cite{AHRS23}), and 
dynamic graph algorithms (e.g., \cite{GuptaP13}), although the general approach of periodic rebuilding is commonplace and appeared in much earlier works in dynamic computational geometry (e.g., \cite{Chan01,Chan03b}).  

The main challenge now lies in the case when the optimal value $\OPT$ is small. 
Our idea is to compute Nemhauser and Trotter's standard LP-based kernel for MVC, and then run a static algorithm on the kernel.
Unfortunately, we are unable to solve the LP exactly in sublinear time.  However, by using the \emph{multiplicative weight update} (MWU) method,
we show that it is possible to solve the LP approximately in time roughly linear in $\OPT$ (and thus sublinear in $n$ in the small $\OPT$ case), by using the intersection detection data structure provided by (i).  We further show that an approximate solution to the LP is still sufficient to yield a kernel with approximately the same size.
The idea of using MWU in the design of dynamic geometric algorithms was pioneered in Chan and He's 
recent work on dynamic geometric set cover \cite{ChanH21}, and also appeared in subsequent work on dynamic piercing \cite{AHRS23}; 
however, these works solved the LP in time superlinear in $\OPT$, which is why their final update time bounds
were significantly worse.

In geometric approximation algorithms, the best-known application of MWU is geometric set cover or hitting set \cite{BronnimannG95}.
Our work highlights its usefulness also to geometric vertex cover---ironically, the application to vertex cover is even simpler (and so
our work might have additional pedagogical value).  Furthermore, the efficient implementation of MWU for dynamic geometric vertex cover
turns out to lead to a nice, unexpected application of another known technique: namely, Eppstein's data structure technique for \emph{generalized dynamic closest pair} problems~\cite{Eppstein95} (see also \cite{Chan20}).

\paragraph{Our framework and techniques for MCM.}
For MCM, our framework only needs requirement~(i).  Here, we need different ideas, since kernels for matching do not seem to work as efficiently.  In the bipartite case, we use an approach based on Hopcroft and Karp's classical matching algorithm~\cite{HopcroftK73}, which is known to
yield good approximation after a constant number of iterations.  We show how to implement the approximate version of Hopcroft and Karp's algorithm in time roughly linear in $\OPT$, by using the data structure from (i).  Previously, Efrat, Itai, and Katz~\cite{efrat2001geometry} showed how to
implement Hopcroft and Karp's algorithm faster using geometric data structures, but their focus was on static algorithms whereas our goal is in getting sublinear time.  Thus, our adaptation of Hopcroft--Karp will be a little more delicate.

In the general non-bipartite case, we apply a technique by Lotker, Patt-Shamir, and Pettie~\cite{LotkerPP15}. which reduces non-bipartite to the bipartite case in the approximate setting.  Har-Peled and Yang~\cite{Har-PeledY22} also applied the same technique to derive their static approximation algorithms for geometric non-bipartite MCM\@.  We reinterpret this technique in terms of \emph{color-coding}~\cite{AlonYZ95}, which allows for efficient derandomization and dynamization, as well as a simpler analysis.



\section{Minimum Vertex Cover}
In this section, we study the MVC problem for dynamic intersection graphs of geometric objects.

\subsection{Approximating the LP via MWU}

\newcommand{\updateweight}{\textsc{update-weight}}
\newcommand{\findminedge}{\textsc{find-min-weight-edge}}

For a graph $G=(V,E)$, a \emph{fractional vertex cover} is a vector
$(x_v)_{v\in V}$ such that $x_u+x_v\le 1$ for all $uv\in E$ and $x_v\in [0,1]$ for all $v\in V$.  
Its \emph{size} is defined as $\sum_{v\in V}x_v$.  Finding a minimum-size fractional vertex cover
corresponds to solving an LP, namely, the standard LP relaxation of the MVC problem.

It is known that this LP is equivalent to computing the MVC in a related bipartite graph,
and thus can be solved exactly by known bipartite MCM algorithms---in fact,
in time almost linear in the number of edges by recent breakthrough results~\cite{ChenKLPGS22}. 
However, there are two issues that prevent us from applying such algorithms.
First of all, we are considering geometric intersection graphs, which may have $\Omega(n^2)$ number of edges; this issue could potentially be fixed by
using known techniques involving \emph{biclique covers} to sparsify the graph (maximum matching in a bipartite graph
then reduces to maximum flow in a sparser 3-layer graph~\cite{FederM95}). Second, we want efficient data
structures that can solve the LP still faster, in \emph{sublinear} time when $\OPT$ is small.

For our purposes, we only need to solve the LP approximately.
Our idea is to use a different well-known technique: \emph{multiplicative weight update} (MWU)\@. The key lemma is stated below.  The MWU algorithm and analysis here are nothing new (the description is short enough that we opt to include it to be self-contained),
and such algorithms have been used before for static and dynamic geometric set cover and other geometric optimization
problems (the application to vertex cover turns out to be a little simpler). However, our contribution is not in the proof of the lemma, but in the realization that MWU reduces the problem to designing a dynamic data structure (for finding min-weight edges subject to vertex-weight updates), which geometric intersection graphs happen to possess, as we will see.

\begin{lemma}\label{lem:mwu}
We are given a graph $G=(V,E)$.  Suppose there is a data structure $\DS$ for storing a vector $(w_v)_{v\in V}$
that can support the following two operations in $\tau$ time: (i) find an edge $uv\in E$ minimizing $w_u+w_v$, and (ii) update a number $w_v$.

Given a data structure $\DS$ for the vector that currently has $w_v=1$ for all $v\in V$,
we can compute a $(1+O(\delta))$-approximation to the minimum fractional vertex cover in
$\OO((1/\delta^2)\,\OPT\cdot \tau)$ time.  Here, $\OPT$ denotes the minimum vertex cover size.
\end{lemma}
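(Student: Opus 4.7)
The plan is to apply the multiplicative weight update (MWU) framework to the LP relaxation of minimum vertex cover, with the weights $w_v$ (initialized to $1$) playing the role of Lagrangian multipliers for the primal edge constraints. At each of $T = \Theta((\OPT/\delta^2)\log n)$ iterations $t = 1,\ldots,T$, the algorithm (i) queries $\DS$ for the edge $e_t = u_t v_t$ minimizing $w_{u_t} + w_{v_t}$, and (ii) multiplicatively updates $w_{u_t} \gets (1+\delta)\,w_{u_t}$ and $w_{v_t} \gets (1+\delta)\,w_{v_t}$ in $\DS$. Since each iteration performs $O(1)$ data-structure operations at cost $O(\tau)$, the total runtime is $O(T\tau) = \OO((\OPT/\delta^2)\tau)$, as required.

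The analysis centers on the potential $\Phi_t = \sum_v w_v^{(t)}$. The update rule gives $\Phi_{t+1} = \Phi_t + \delta(w_{u_t} + w_{v_t})$. Let $\OPT_{\mathrm{LP}} \le \OPT$ denote the fractional vertex cover LP optimum, which by LP duality equals the maximum fractional matching value. The key inequality is that for any optimal fractional matching $y^*$,
\[
\sum_e y^*_e\,(w_u + w_v) \;=\; \sum_v w_v \sum_{e \ni v} y^*_e \;\le\; \sum_v w_v \;=\; \Phi_t,
\]
so $\min_e (w_u+w_v) \le \Phi_t/\OPT_{\mathrm{LP}}$. Consequently $\Phi_{t+1} \le \Phi_t\bigl(1 + \delta/\OPT_{\mathrm{LP}}\bigr)$, and $\Phi_T \le n \exp(\delta T/\OPT_{\mathrm{LP}})$. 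Combined with $\Phi_T \ge (1+\delta)^{\max_v N_v}$, where $N_v$ is the number of times $v$ was touched, this yields $\max_v N_v \le (1+O(\delta))\,T/\OPT_{\mathrm{LP}}$ for the chosen $T$.

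To extract a fractional vertex cover, define $p_v = w_v^{(T)}/\Phi_T$ and $\mu = \min_e (p_u + p_v)$ (computed by one final $\DS$ query), then output $x_v = p_v/\mu$. By construction $x_u + x_v \ge 1$ on every edge, so $(x_v)$ is feasible with $\sum_v x_v = 1/\mu$. The step I expect to require the most care is showing $\mu \ge (1-O(\delta))/\OPT_{\mathrm{LP}}$ at termination, which yields $\sum_v x_v \le (1+O(\delta))\OPT_{\mathrm{LP}} \le (1+O(\delta))\OPT$. This follows from a standard MWU convergence argument: writing $\mu_t$ for the minimum $(p_u+p_v)$ at step $t$, the telescoped identity $\log(\Phi_T/n) = \sum_t \log(1+\delta\mu_t)$ together with the upper bound on $\Phi_T$ forces the time-averaged $\mu_t$ close to $1/\OPT_{\mathrm{LP}}$, and tracking how $\mu_t$ evolves (the chosen edge's fraction scales by $(1+\delta)/(1+\delta\mu_t)$, while every other edge's fraction scales by $1/(1+\delta\mu_t)$) drives the final $\mu$ toward the LP equilibrium $1/\OPT_{\mathrm{LP}}$ once $T$ is sufficiently large. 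Since $\OPT$ is not known in advance, one wraps the algorithm in a standard doubling search over $\OPT$, contributing only a logarithmic factor absorbed into the $\OO$ notation.
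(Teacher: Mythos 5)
Your setup---the multiplicative update $w_u, w_v \gets (1+\delta) w_u, (1+\delta) w_v$ on the min-weight edge, the potential $\Phi_t = \sum_v w_v$, the bound $\min_e (w_u+w_v) \le \Phi_t/\mathrm{OPT}_{\mathrm{LP}}$ via a fractional matching, the iteration count $T = \Theta((1/\delta^2)\,\mathrm{OPT}\log n)$, and the runtime accounting---all match the paper. The gap is in the extraction step. You run a \emph{fixed} number of iterations $T$, then set $\mu = \mu_T = \min_e(w_u+w_v)/\Phi_T$ and output $x_v = w_v/(\Phi_T \mu_T)$, and you need $\mu_T \ge (1-O(\delta))/\mathrm{OPT}_{\mathrm{LP}}$. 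Your argument for this---that the telescoped identity $\log(\Phi_T/n)=\sum_t \log(1+\delta\mu_t)$ combined with $\Phi_T\ge (1+\delta)^{\max_v N_v}$ forces the time-averaged $\mu_t$ up toward $1/\mathrm{OPT}_{\mathrm{LP}}$, and that ``tracking how $\mu_t$ evolves'' then drives the final value to equilibrium---does not close the loop. A large time-average does not bound the last term of the sequence: $\mu_t$ is not monotone (the boosted edge's fraction rises, but an uninvolved edge's fraction falls by a factor $1/(1+\delta\mu_t)$, and after $\Theta(1/(\delta\bar\mu))$ iterations an untouched edge's fraction can drop by a constant factor, which is far fewer than $T$ iterations). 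So $\mu_T$ could be, say, half of $\bar\mu$ at the stopping time you chose, and the output size $1/\mu_T$ would then be a $2$-approximation rather than $1+O(\delta)$. The heuristic about edge-fraction scaling is not a proof of convergence of the last iterate.

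The paper avoids this entirely by never needing to reason about the last iterate. It guesses a value $z$ and runs a \emph{while}-loop with explicit termination condition $w_u + w_v < W/z$; whenever the loop stops, $\min_e (w_u+w_v) \ge W/z$ by construction, so $x_v = \min\{zw_v/W,1\}$ is immediately a feasible cover of size at most $z$---no delicate argument about the final $\mu$ is needed. The iteration bound is then proved separately: comparing the multiplicative growth of $W$ per iteration ($\le 1+\delta/z$) against the growth of the largest exponent $c_v$ (tracked against the optimal fractional cover, giving $W \ge (1+\delta)^{t/z^*}$) shows the loop must terminate within $O((1/\delta^2)z^*\log n)$ iterations whenever $z \ge (1+\delta)z^*$. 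The exponential search over $z$ is therefore integral to correctness, not merely a way to handle unknown OPT. Your proof could be repaired by stopping at the iteration $t^*$ that maximizes $\mu_t$ (and outputting from there), or better, by adopting the paper's threshold-based termination, which sidesteps the issue cleanly. As written, the claim $\mu_T \ge (1-O(\delta))/\mathrm{OPT}_{\mathrm{LP}}$ is unjustified.
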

\begin{proof}
Given a number $z$, the following algorithm attempts to find a fractional vertex cover of size at most~$z$ (below, $W$ denotes $\sum_{v\in V}w_v$):

\begin{quote}
\begin{tabbing}
for \= \kill
let $w_v=1$ for all $v\in V$, and $W=n$\\
while there exists $uv\in E$ with $w_u+w_v < W/z$ do \\
\> let $uv$ be such an edge\\
\> $W\leftarrow W + \delta(w_u+w_v)$,\ $w_u\leftarrow (1+\delta)w_u$,\ $w_v\leftarrow (1+\delta)w_v$
\end{tabbing}
\end{quote}

If and when the algorithm terminates, we have $w_u+w_v\ge W/z$ for all $uv\in E$.
Thus, defining $x_v := \min\{zw_v/W,\,1\}$, we have $x_u+x_v\ge 1$ for all $uv\in E$, and $\sum_{v\in V} x_v \le z$, i.e.,
$(x_v)_{v\in V}$ is a fractional vertex cover of size at most $z$.

We now bound the number of iterations $t$.
In each iteration, $W$ increases by at most a factor of $1+\delta/z$.  Thus,
at the end, 
\[ W\le (1+\delta/z)^t n.\]
Write each $w_v$ as $(1+\delta)^{c_v}$ for some integer~$c_v$.
Let $(x_v^*)_{v\in V}$ be an optimal fractional vertex cover of size $z^*$.
In each iteration, $\sum_{v\in V} c_vx_v^*$ increases by at least 1 (since we increment $c_u$ and $c_v$
for the chosen edge $uv$, and we know $x_u^*+x_v^*\ge 1$).
Thus, at the end, $\sum_{v\in V} c_vx_v^*\ge t$.
Since $\sum_{v\in V} x_v^* = z^*$,
it follows that $\max_{v\in V} c_v\ge t/z^*$.  Thus,
\[ W\ge (1+\delta)^{t/z^*}.\]
Therefore, $(1+\delta)^{t/z^*}\le (1+\delta/z)^tn\le e^{\delta t/z}n$, implying $(t/z^*)\ln (1+\delta) \le \delta t/z + \ln n$.
So, if $z\ge (1+\delta)z^*$, then $t\le \frac{z^*\ln n}{\ln (1+\delta) - \delta/(1+\delta)}= O((1/\delta^2)z^*\log n)$.

Note that only $O(t)=O((1/\delta^2)z^*\log n)$ of the numbers $w_v$ are not equal to 1, so 
the vector $(x_v)_{v\in V}$ can be encoded in $\OO(z^*)$ space.

We can try different $z$ values by binary or exponential search
till the algorithm terminates in $O((1/\delta^2)z^*\log n)$
iterations.  Each run can be implemented with $O((1/\delta^2)z^*\log n)$ operations in $\DS$.
After each run, we reset all the modified values $w_v$ back to 1 by $O((1/\delta^2)z^*\log n)$ update operations in $\DS$.
\end{proof}

\subsection{Kernel via Approximate LP}

Our approach for solving the vertex cover problem is to use a standard \emph{kernel} by
Nemhauser and Trotter~\cite{nemhauser1975vertex}, which allows us to reduce the problem to an instance where the number of vertices
is at most $2\,\OPT$.  Nemhauser and Trotter's construction is obtained from the LP solution: the kernel is 
simply the subset of all vertices $v\in V$ with $x_v=\frac12$.

In our scenario, we are only able to solve the LP approximately.  We observe that this is still enough to give a kernel
of approximately the same size.  We adapt the standard analysis of Nemhauser and Trotter, but
some extra ideas are needed.  The proof below will be self-contained.

\begin{lemma}\label{lem:kernel}
Let $c\ge 1$ and $0\le\delta<\gamma<\frac14$.
Given a $(1+O(\delta))$-approximation to the minimum fractional vertex cover in a graph $G=(V,E)$,
we can compute a subset $K\subseteq V$ of size at most $(2+O(\gamma))\,\OPT$, in $O(\OPT)$ time, such that 
a $c(1+O(\sqrt{\delta/\gamma}))$-approximation to the minimum vertex cover of $G$ can be obtained from
a $c$-approximation to the minimum vertex cover of $G[K]$ (the subgraph of $G$ induced by $K$).  Here, $\OPT$ denotes the minimum vertex cover size of $G$.
\end{lemma}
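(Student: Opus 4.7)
The plan is to use a \emph{thresholded kernel} with a carefully chosen width. Set $\eta := \sqrt{\delta\gamma}$, pick a threshold $\gamma' \in [\gamma, 2\gamma]$ (to be chosen by averaging below), and define $V_{\text{high}} := \{v : x_v > \tfrac12 + \gamma'\}$, $V_{\text{low}} := \{v : x_v < \tfrac12 - \gamma'\}$, and $K := V \setminus (V_{\text{low}} \cup V_{\text{high}})$; the post-processing takes a $c$-approximate vertex cover $T$ of $G[K]$ and returns $T \cup V_{\text{high}}$ as the output cover. Feasibility of $x$ forbids any edge inside $V_{\text{low}}$ (its endpoints would sum to $<1-2\gamma'<1$), so $V_{\text{low}}$ is an independent set whose neighbors all lie in $V_{\text{high}}$, making $T \cup V_{\text{high}}$ a vertex cover of $G$. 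The kernel-size bound $|K|\le(2+O(\gamma))\OPT$ follows from $(\tfrac12-\gamma')|K| \le \sum_v x_v \le (1+O(\delta))\OPT$ together with $\gamma'\le 2\gamma$.

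Let $S$ denote an optimal vertex cover of $G$, and set $A := V_{\text{high}} \setminus S$, $B := V_{\text{low}} \cap S$. Using $\OPT(G[K]) \le |S\cap K|$ and the partition $|S|=|S\cap V_{\text{low}}|+|S\cap K|+|S\cap V_{\text{high}}|$, a short count gives $|T \cup V_{\text{high}}| \le c\,\OPT(G[K]) + |V_{\text{high}}| \le c(\OPT + |A| - |B|)$ (the last step uses $c \ge 1$), so the whole estimate reduces to showing $|A|-|B| \le O(\sqrt{\delta/\gamma})\OPT$. I will use a feasibility-preserving LP modification. For any $\beta \in (0, \tfrac12-\gamma')$, set $A_\beta := \{v \in A : x_v > \tfrac12+\gamma'+\beta\}$, and construct $x'$ by decreasing every $v \in A_\beta$ by $\beta$ and boosting every $v \in B$ by $\beta$. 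Feasibility holds on every relevant edge type: edges inside $A_\beta$ still sum to at least $1+2\gamma'$; edges from $A_\beta$ to $K\cap S$ keep sum $\ge (\tfrac12+\gamma')+(\tfrac12-\gamma')=1$; edges $A_\beta$--$B$ have exactly cancelling changes; and edges from $A$ to $V_{\text{low}}\setminus S$ or $K\setminus S$ cannot exist, since any such vertex (being outside the cover $S$) has its entire neighborhood inside $S$ and therefore disjoint from $A$. The feasibility of $x'$ gives $\sum_v x'_v \ge \OPT_{LP}$; combined with $\sum_v x'_v = \sum_v x_v - \beta(|A_\beta|-|B|)$ this yields $|A_\beta|-|B| \le O(\delta/\beta)\OPT$.

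I now decompose $|A| = |A_\eta| + |A\setminus A_\eta|$ with $\eta = \sqrt{\delta\gamma}$: the first piece contributes $|A_\eta|-|B| \le O(\delta/\eta)\OPT = O(\sqrt{\delta/\gamma})\OPT$, and the second is bounded by $N(\gamma') := |\{v : x_v \in (\tfrac12+\gamma',\,\tfrac12+\gamma'+\eta]\}|$. To keep $N(\gamma')$ small, I choose $\gamma'$ by averaging: each vertex contributes to $N(\gamma')$ for an $\eta$-length window of $\gamma'\in[\gamma,2\gamma]$, so $\int_\gamma^{2\gamma} N(\gamma')\,d\gamma' \le \eta \cdot |\{v : x_v > \tfrac12+\gamma\}| = O(\eta\,\OPT)$ (using the same LP argument that bounded $|V_{\text{high}}|$), forcing some $\gamma' \in [\gamma,2\gamma]$ with $N(\gamma') \le O(\eta/\gamma)\OPT = O(\sqrt{\delta/\gamma})\OPT$. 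Combining yields $|A|-|B| \le O(\sqrt{\delta/\gamma})\OPT$.

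The main obstacle is the balancing in the last step: the LP modification contributes error $O(\delta/\beta)\OPT$ while the border slab $N(\gamma')$ is $O(\eta/\gamma)\OPT$ in the averaged worst case, and the geometric mean $\beta=\eta=\sqrt{\delta\gamma}$ is precisely what forces the $\sqrt{\delta/\gamma}$ loss. For the running time, the MWU output of \lemmaref{lem:mwu} encodes $x$ using $\OO(\OPT)$ non-default weights, so candidate values of $\gamma'$ can be drawn from those non-default $x$-values falling in $(\tfrac12+\gamma,\tfrac12+2\gamma]$, and for each, the slab count $N(\gamma')$ and the kernel $K$ are computable by scanning through the $\OO(\OPT)$ non-default values.
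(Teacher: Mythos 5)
Your proof is correct and uses essentially the same strategy as the paper's: define a threshold band around $\tfrac12$, output the kernel cover together with the high vertices, and bound $\lvert V_{\text{high}}\setminus S^*\rvert-\lvert V_{\text{low}}\cap S^*\rvert$ by an LP-modification argument plus a slab-count optimization, balancing at $\sqrt{\delta\gamma}$. The only substantive variation is in how the border slab is handled. The paper \emph{enlarges} the boosted set to $(L\cup L')\cap S^*$, where $L'$ is a lower slab, so that feasibility survives on edges from $H\setminus S^*$ into the slab, and then absorbs the extra $\lvert L'\cap S^*\rvert$ term via the slab bound. You instead \emph{shrink} the decreased set to $A_\eta$ (high vertices strictly above the slab), so the feasibility check is immediate on every remaining edge type, and you bound the leftover $\lvert A\setminus A_\eta\rvert$ by the slab count directly. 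You also choose the threshold by averaging $N(\gamma')$ over a continuous interval $[\gamma,2\gamma]$, whereas the paper minimizes the slab count over $O(\gamma/\lambda)$ discrete multiples of $\lambda$. These are dual implementations of the same trick and both give the $O(\sqrt{\delta/\gamma})$ loss, so your writeup is a valid alternative.

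Two small points worth tightening if you keep this version. First, in the feasibility check you should state explicitly that edges from $A_\beta$ to $V_{\text{high}}\cap S$ are also fine (the sum there is even larger than on $A_\beta$--$(K\cap S)$ edges); you covered them implicitly but did not list them. Second, your chain $\lvert T\cup V_{\text{high}}\rvert\le c\,\OPT(G[K])+\lvert V_{\text{high}}\rvert\le c(\OPT+\lvert A\rvert-\lvert B\rvert)$ uses $c\ge 1$ twice (once to absorb $\lvert V_{\text{high}}\cap S\rvert$ and once to absorb $\lvert A\rvert$); you might spell out the intermediate step $c\lvert S\cap K\rvert+\lvert V_{\text{high}}\cap S\rvert+\lvert A\rvert\le c(\lvert S\rvert-\lvert B\rvert)+\lvert A\rvert$ so the reader can verify it at a glance.
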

\begin{proof}
Let $(x_v)_{v\in V}$ be the given fractional vertex cover.  Let $\lambda < \gamma$ be a parameter to be set later.
Pick a value $\alpha\in [\frac12 - \gamma-\lambda,\frac12-\lambda]$ which is an integer multiple of $\lambda$,
minimizing $|\{v\in V: \alpha \le x_v < \alpha+\lambda\}|$.  This can be found in $O(|\{v\in V: x_v \ge \frac{1}{2}-\gamma-\lambda\}|)\le O(\OPT)$ time.\footnote{
This assumes an appropriate encoding of the vector $(x_v)_{v\in V}$, for example, the
encoding from the proof of Lemma~\ref{lem:mwu}.
}

Partition $V$ into 3 subsets: $L=\{v\in V: x_v < \alpha\}$, $H=\{v\in V: x_v > 1-\alpha\}$, and $K=\{v\in V: \alpha \le x_v \le 1-\alpha\}$.
Note that $|K|\le \frac{1}{\alpha}\sum_{v\in V}x_v\le (2+O(\gamma))(1+\delta)\OPT = (2+O(\gamma))\OPT$.

Let $S_K$ be a $c$-approximate minimum vertex cover of $G[K]$. 
We claim that $S := S_K\cup H$ is a $c(1+O(\sqrt{\delta/\gamma}))$-approximation to the minimum vertex cover of $G$.

First, $S_K\cup H$ is a vertex cover of $G$, since vertices in $L$ can only be adjacent
to vertices in $H$.

Let $S^*$ be a minimum vertex cover of $G$, with $|S^*|=\OPT$.
Since $K\cap S^*$ is a vertex cover of $G[K]$, we have
$|S_K|\le c|K\cap S^*|$, and so $|S|\le c(|K\cap S^*| + |H|) = c(|S^*| + |H\setminus S^*| - |L\cap S^*| )$.

We will upper-bound $|H\setminus S^*|-|L\cap S^*|$.
To this end, let $L'=\{v\in V: \alpha \le x_v < \alpha+\lambda\}$, and define the following modified vector $(x_v')_{v\in V}$:
\[ x'_v = \left\{\begin{array}{ll}
              x_v-\lambda & \mbox{if $v\in H\setminus S^*$}\\
              x_v+\lambda & \mbox{if $v\in (L\cup L')\cap S^*$}\\
              x_v  & \mbox{otherwise.}
\end{array}\right.
\]
Note that $(x'_v)_{v\in V}$ is still a fractional vertex cover, since for each edge $uv\in E$ with $u\in H\setminus S^*$ and $v\in (L\cup L')\cap S^*$,
we have $x'_u+x'_v = (x_u-\lambda)+(x_v+\lambda)\ge 1$; on the other hand, for each edge $uv\in E$ with $u\in H\setminus S^*$ and
$v\not\in (L\cup L')\cap S^*$,
we have $v\in S^*$ (since $S^*$ is a vertex cover) and so $v\not\in L\cup L'$,
implying that $x'_u+x'_v \ge (1-\alpha-\lambda)+(\alpha+\lambda)\ge 1$.
Now, $\sum_{v\in V} x_v - \sum_{v\in V} x'_v = \lambda (|H\setminus S^*| - |L\cap S^*| - |L'\cap S^*|)$.
On the other hand, $\sum_{v\in V} x_v - \sum_{v\in V} x'_v\le (1-\frac{1}{1+O(\delta)})\sum_{v\in V}x_v
\le O(\delta)|S^*|$.  It follows that $|H\setminus S^*| - |L\cap S^*| \le O(\frac{\delta}{\lambda})|S^*| + |L'\cap S^*|$.

By our choice of $\alpha$, we have $|L'|\le O(\frac{1}{\gamma/\lambda}) \cdot |\{v\in V: x_v \ge \frac{1}{2}-\gamma-\lambda\}|
\le O(\frac{\lambda}{\gamma})|S^*|$.
We conclude that $|S|\le c(|S^*| + |H\setminus S^*| - |L\cap S^*| ) \le c(1+O(\frac\delta\lambda + \frac\lambda\gamma)) |S^*|$.  Choose $\lambda=\sqrt{\gamma\delta}$.
\end{proof}

\subsection{Dynamic Geometric Vertex Cover via Kernels}

We now use kernels to reduce the dynamic vertex cover for geometric intersection graphs to a special case of
static vertex cover where the number of objects is approximately at most $2\,\OPT$.
We use a simple, standard idea for dynamization: be lazy, and periodically recompute the solution only after every $\eps\,\OPT$ updates
(since the optimal size changes by at most 1 per update).
We observe that the data structure subproblem of dynamic min-weight intersecting pair, needed in Lemma~\ref{lem:mwu}, is reducible to 
dynamic intersection detection by known techniques.

\begin{lemma}\label{lem:Epp}
Let $\CC$ be a class of geometric objects, where there is a dynamic data structure $\DS_0$ for $n$ objects in $\CC$ that 
can detect whether there is an object 
intersecting a query object, and supports insertions and deletions of objects, with $O(\tau_0(n))$ query and update time.  

Then there is a dynamic data structure $\DS$ for $n$ weighted objects in $\CC$ that maintains
an intersecting pair of objects minimizing the sum of the weights, under insertions and deletions
of weighted objects, with $\OO(\tau_0(n))$ amortized time.
\end{lemma}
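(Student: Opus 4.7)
The plan is to apply Eppstein's technique for generalized dynamic closest pair~\cite{Eppstein95} (see also~\cite{Chan20}). Eppstein's framework reduces the maintenance of the pair minimizing a pairwise function $f(u,v)$ over a dynamic set to the following \emph{static} subproblem: given a subset $T$ of the elements, build a structure that answers queries of the form ``given an object $q$, return $\arg\min_{t \in T} f(q,t)$''. In our setting $f(u,v)=w_u+w_v$ if $u\cap v\ne\emptyset$ and $+\infty$ otherwise, so the subproblem becomes: given a subset $T$ of weighted objects and a query object $q$, find the object of $T$ of minimum weight that intersects $q$.

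To build this static substructure on $k$ objects, I would arrange them in a balanced binary search tree sorted by weight, and at each internal node maintain a copy of $\DS_0$ holding the objects stored in that subtree. A min-weight intersection query descends from the root, at each step invoking $\DS_0$ on the left child to test whether that subtree contains any object intersecting $q$; it then recurses into the left subtree if one is found, and otherwise into the right. This answers a query in $O(\tau_0(n)\log n)$ time, and the tree is built by performing $O(k\log k)$ insertions into the $\DS_0$ copies along root-to-leaf paths, for a total build time of $O(\tau_0(n)\,k\log k)$. Plugging this substructure into Eppstein's framework, whose amortized update cost is on the order of (build time)/$n$ plus query time, times $\log n$, yields a dynamic data structure $\DS$ maintaining the min-weight intersecting pair in $\OO(\tau_0(n))$ amortized time per insertion or deletion of a weighted object, as claimed. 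A weight change, should one arise, can be simulated by a deletion followed by a reinsertion at the new weight.

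The main obstacle is that $\DS_0$ only \emph{detects} intersection, whereas Eppstein's technique needs to \emph{optimize} a quantity (here, the weight) over all intersecting partners. The weight-sorted BST augmentation bridges this gap with only an extra logarithmic factor, exploiting the fact that the weight coordinate lives in a total order so that the classical emptiness-driven binary search suffices. No further geometric assumptions beyond the existence of $\DS_0$ are needed, which is why the statement applies uniformly to every object class $\CC$ for which intersection detection admits an efficient dynamic structure.
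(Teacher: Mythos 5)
Your proposal is correct and follows essentially the same route as the paper: invoke Eppstein's dynamic closest pair reduction with the distance $f(u,v)=w_u+w_v$ for intersecting pairs (and $+\infty$ otherwise), then implement the required min-weight intersection search via a weight-sorted search tree whose nodes each carry a copy of $\DS_0$, which is exactly the multi-level structure the paper describes. The only cosmetic difference is that you phrase Eppstein's reduction as targeting a statically built substructure while the paper phrases it as targeting a dynamic one; both formulations are standard and both land at $\OO(\tau_0(n))$ amortized time.
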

\begin{proof}
Eppstein~\cite{Eppstein95} gave a general technique to reduce the problem of dynamic closest pair to
the problem of dynamic nearest neighbor search, for arbitrary distance functions, while increasing
the time per operation by at most two logarithmic factors (with amortization).  Chan~\cite{Chan20} gave an alternative
method to achieve a similar result.  The  $\DS$ problem can be viewed as a dynamic closest pair problem, where the distance between
objects $u$ and $v$ is $w_u+w_v$ if they intersect, and $\infty$ otherwise.
Thus, our problem reduces to designing a data structure that can find a min-weight object intersecting
a query object, subject to insertions and deletions of objects (\emph{dynamic min-weight intersection searching}).

We can further reduce this to the $\DS_0$ problem (\emph{dynamic intersection detection}) by a standard \emph{multi-level} data structuring technique~\cite{AgarwalE99}, where
the primary data structure is a 1D search tree over the weights, 
and each node of the tree stores a secondary data structure for dynamic intersection detection.
Query and update time increase by one more logarithmic factor.
\end{proof}

\begin{theorem}\label{thm:main}
Let $c\ge 1$, $\eps>0$, and $0\le\delta<\gamma<\frac14$.
Let $\CC$ be a class of geometric objects with the following oracles:
\begin{enumerate}
\item[(i)] a dynamic data structure $\DS_0$ for $n$ objects in $\CC$ that
can detect whether there is an object 
intersecting a query object, and supports insertions and deletions of objects, with $O(\tau_0(n))$ query and update time;
\item[(ii)] a static algorithm $\AAA$ for computing a $c$-approximation
of the minimum vertex cover of the intersection graph of $n$ objects in $\CC$ in $T(n)$ time, under the promise that 
$n\le (2+O(\gamma))\,\OPT$, where $\OPT$ is the optimal vertex cover size.
\end{enumerate}

Then there is a dynamic data structure for $n$ objects in $\CC$ that maintains a $c(1+O(\sqrt{\delta/\gamma}+\eps))$-approximation
of the minimum vertex cover of the intersection graph, under insertions and deletions of objects,
in $\OO((1/(\delta^2\eps)) \tau_0(n) + (1/\eps)T(n)/n)$ amortized time, assuming that $T(n)/n$ is monotonically increasing and $T(2n)=O(T(n))$.
\end{theorem}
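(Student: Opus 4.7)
My plan is to combine a standard periodic-rebuilding strategy with the MWU-based kernelization from Lemmas~\ref{lem:mwu} and~\ref{lem:kernel}. Throughout, I maintain an instance of the data structure $\DS$ given by Lemma~\ref{lem:Epp}, initialized with unit weights on every object; each object insertion or deletion is reflected in $\DS$ in $\OO(\tau_0(n))$ amortized time. Let $\OPT_0$ denote the approximate MVC size computed at the most recent rebuild. I trigger a rebuild every $\max(1,\lfloor\eps\OPT_0\rfloor)$ updates, and between rebuilds I maintain the output cover $S$ by appending each inserted object to $S$ and removing each deleted object from $S$, which trivially keeps $S$ a valid vertex cover of the current intersection graph.

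At each rebuild, I first invoke Lemma~\ref{lem:mwu} (operating on $\DS$) to compute a $(1+O(\delta))$-approximate fractional vertex cover $(x_v)_{v\in V}$ in $\OO((1/\delta^2)\OPT\cdot\tau_0(n))$ time, and reset the weights in $\DS$ back to $1$ afterwards. Next, I apply Lemma~\ref{lem:kernel} to $(x_v)$, obtaining a kernel $K$ with $|K|\le(2+O(\gamma))\OPT$ together with the associated heavy set $H$, in $O(\OPT)$ time. I then run $\AAA$ on the $|K|$ objects of $K$ to obtain a $c$-approximate minimum vertex cover $S_K$ of $G[K]$ in time $T(|K|)=O(T(\OPT))$, and set $S:=S_K\cup H$.

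For the approximation ratio, Lemma~\ref{lem:kernel} guarantees $|S|\le c(1+O(\sqrt{\delta/\gamma}))\OPT$ immediately after a rebuild. Over the next $\le\eps\OPT_0$ updates, $|S|$ grows by at most $\eps\OPT_0$ while the true $\OPT$ changes by at most $\eps\OPT_0$, so the current $\OPT$ lies in $[(1-\eps)\OPT_0,(1+\eps)\OPT_0]$; using $c\ge 1$ to absorb the additive $\eps$ slack, the ratio $|S|/\OPT$ is $c(1+O(\sqrt{\delta/\gamma}+\eps))$ as claimed. For the running time, the per-update maintenance of $\DS$ costs $\OO(\tau_0(n))$; the rebuild cost $\OO((1/\delta^2)\OPT\cdot\tau_0(n))+T(O(\OPT))$ is amortized over $\Theta(\eps\OPT)$ updates, yielding $\OO((1/(\delta^2\eps))\tau_0(n))+O(T(O(\OPT))/(\eps\OPT))$. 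By the hypotheses that $T(n)/n$ is monotonically increasing and $T(2n)=O(T(n))$, we get $T(O(\OPT))/\OPT=O(T(n)/n)$, which completes the bound.

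The main obstacle, and the whole reason the preceding lemmas are needed, is that the MWU step must run in time roughly linear in $\OPT$ rather than in $n$; this is what makes the overall update time genuinely sublinear when $\OPT\ll n$, and it in turn requires the efficient dynamic min-weight intersecting-pair structure of Lemma~\ref{lem:Epp}. Two minor technicalities remain: handling the regime $\eps\OPT_0<1$ by rebuilding on every update (the rebuild cost is dominated by $\OO(\tau_0(n))+T(O(1))$ there and fits the claimed bound), and checking that the density precondition $|K|\le(2+O(\gamma))\OPT(G[K])$ for $\AAA$ is met on the kernel, which is a consequence of the Nemhauser--Trotter-style structure exploited in Lemma~\ref{lem:kernel}---in particular, the absence of edges between $L$ and $K$ in the partition constructed there.
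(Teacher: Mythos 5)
Your proof follows essentially the same route as the paper's: maintain the Eppstein-style structure $\DS$ persistently (resetting the MWU weights to $1$ after each run), periodically rebuild by running Lemma~\ref{lem:mwu} to get an approximate fractional cover, applying Lemma~\ref{lem:kernel} to extract $K$ and $H$, running $\AAA$ on $K$, and between rebuilds keeping the cover valid by lazily appending inserted objects and removing deleted ones. The one genuine point of divergence is the rebuild schedule. The paper fixes a guess $b$ with $b\le\OPT<2b$, uses a fixed phase length $\eps b$, and runs $O(\log n)$ parallel copies over all powers of two $b$, selecting the right one; you instead use an adaptive phase length $\eps\OPT_0$ keyed to the approximate value $\OPT_0=|S|$ from the last rebuild. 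The adaptive variant is arguably cleaner (it avoids the parallel copies and the selection step), but your approximation analysis has a small slip: $\OPT_0$ is an \emph{overestimate}, up to a factor $c(1+O(\sqrt{\delta/\gamma}))$, of the true $\OPT$ at rebuild time, so the current $\OPT$ is lower-bounded by $\OPT_{\mathrm{rebuild}}-\eps\OPT_0$, not by $(1-\eps)\OPT_0$. Pushing this through yields a multiplicative loss of $1+O(c\,\eps)$ rather than $1+O(\eps)$; fine when $c=O(1)$ as in all applications, but the paper's fixed-$b$ phase length $\eps b\le\eps\,\OPT$ sidesteps the $c$ dependence entirely.

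On your closing remark about the density precondition $|K|\le(2+O(\gamma))\,\OPT(G[K])$ for running $\AAA$ on the kernel: you are right to flag it, but the justification you give — ``the absence of edges between $L$ and $K$'' — is not by itself sufficient. That structural fact is needed for $S_K\cup H$ to be a cover, but it does not bound $|K|$ against $\OPT(G[K])$: vertices of $K$ could be isolated in $G[K]$ (all their neighbors in $H$) and still have $x_v\ge\alpha$ in an only-approximately-optimal LP solution, which would make $\OPT(G[K])$ smaller than the promise requires. The paper's own proof of \theoremref{thm:main} also glosses over this verification, so it is a shared subtlety rather than a defect specific to your write-up, but you should not present the $L$–$K$ separation as a complete argument for it.
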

\begin{proof}
Assume that $b\le\OPT < 2b$ for a given parameter $b$.
Divide into phases with $\eps b$ updates each.
At the beginning of each phase:
\begin{enumerate}
\item Compute a $(1+\delta)$-approximation to the minimum fractional vertex cover
by Lemma~\ref{lem:mwu} in $\OO((1/\delta^2)b\cdot \tau_0(n))$ amortized time using the data structure $\DS$ from Lemma~\ref{lem:Epp}.
(A weight change can be done by a deletion and an insertion.  It is important to note that we don't rebuild $\DS$
at the beginning of each phase; we continue using the same structure $\DS$ in the next phase, after resetting
the modified weights back to~1.)
\item Generate a kernel $K$ with size at most $(2+O(\gamma))\,\OPT$ by Lemma~\ref{lem:kernel} in $O(b)$ time.
\item Compute a $c$-approximation to the minimum vertex cover of the intersection graph of $K$ by the static algorithm $\AAA$ in
$O(T((2+O(\gamma))\,\OPT))=O(T(b))$ time, from which we obtain a $c(1+O(\sqrt{\delta/\gamma}))$-approximation
of the minimum vertex cover of the entire intersection graph.
\end{enumerate}
Since the above is done only at the beginning of each phase, the amortized cost per update is
$\OO(\frac{(1/\delta^2)b\cdot \tau_0(n) + T(b)}{\eps b})$.

During a phase, we handle an object insertion simply by inserting it to the current cover, and we handle an object deletion
simply by removing  it from the current cover.  This incurs an additive error at most $O(\eps b)=O(\eps\,\OPT)$.
We also perform the insertion/deletion of the object in $\DS$, with initial weight 1, in $\OO(\tau_0(n))$ amortized time.

How do we obtain a correct guess $b$?  We build the above data structure
for each $b$ that is a power of 2, and run the algorithm simultaneously for each $b$ (with appropriate cap on the run time based on $b$).
\end{proof}

\subsection{Specific Results}\label{sec:specific}

We now apply our framework to solve the dynamic geometric vertex cover problem
for various specific families of geometric objects.  By Theorem~\ref{thm:main}, it suffices to provide
(i)~a dynamic data structure $\DS_0$ for intersection detection queries,
and (ii)~a static algorithm $\AAA$ for solving the special case of the vertex cover problem
when $n\le (2+O(\gamma))\OPT$.

\paragraph{Disks in $\R^2$.}
Intersection detection queries for disks in $\R^2$ reduce to additively weighted Euclidean nearest neighbor search,
where the weights (different from the weights from MWU) are the radii of the disks.
Kaplan et al.~\cite{KaplanKKKMRS22} adapted Chan's data structure~\cite{Chan20a, Chan01} for dynamic Euclidean nearest neighbor search in $\R^2$
and obtained a data structure for
additively weighted Euclidean nearest neighbor search  with polylogarithmic amortized update time and query time. Thus,
$\DS_0$ can be implemented in $\tau_0(n)=O(\log^{O(1)}n)$ amortized time for disks in~$\R^2$.

In Appendix~\ref{app:disks}, we give a static $(1+O(\eps))$-approximation algorithm $\AAA$ for MVC for disks, under the promise that $n=O(\OPT)$, with running time $T(n)=\OO(2^{O(1/\eps^2)}n)$.
The algorithm is obtained by modifying a known {\sf PTAS} for MIS for disks~\cite{Chan03}.

Applying Theorem~\ref{thm:main} with $c=1+O(\eps)$, $\gamma=\Theta(1)$, and $\delta=\eps^2$, we obtain our main result for disks:

\begin{corollary}\label{disk:MVC}
There is a dynamic data structure for $n$ disks in $\R^2$
that maintains a $(1+O(\eps))$-approximation of the minimum vertex cover
of the intersection graph,  under insertions and deletions,
in $O(2^{O(1/\eps^2)}\log^{O(1)}n)$ amortized time.
\end{corollary}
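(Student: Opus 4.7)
The strategy is to instantiate the general framework of Theorem~\ref{thm:main} for the class $\mathcal{C}$ of disks in $\R^2$. The corollary is essentially a checklist: verify the two oracles (i) and (ii), and substitute parameters.

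First, I would verify oracle (i), the dynamic intersection detection structure $\DS_0$. For disks in $\R^2$, the query ``does the query disk intersect any stored disk?'' is equivalent to an additively weighted Euclidean nearest neighbor query, where each input disk is represented by its center with weight equal to its radius, and the distance between a query disk (center $q$, radius $r$) and a stored center $p$ with weight $\rho$ is $\|q-p\|-(r+\rho)$: intersection happens iff this is at most $0$. Kaplan et al.~\cite{KaplanKKKMRS22}, building on Chan's dynamic nearest neighbor structure~\cite{Chan20a,Chan01}, achieve polylogarithmic amortized update and query time for this primitive. So $\tau_0(n)=\log^{O(1)}n$.

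Second, I would verify oracle (ii), the static approximation under the promise $n\le (2+O(\gamma))\,\OPT$. This is the nontrivial ingredient, and is the step I would regard as the main obstacle; however, I would rely on the construction deferred to Appendix~\ref{app:disks}, which produces a $(1+O(\eps))$-approximate MVC algorithm for disks running in time $T(n)=\widetilde O(2^{O(1/\eps^2)}n)$ whenever $n=O(\OPT)$. The idea there is to adapt Chan's shifting-based \textsf{PTAS} for MIS on disks~\cite{Chan03} to vertex cover: after the promise bounds the input size in terms of $\OPT$, one can afford to run a \textsf{PTAS} whose cost is linear in $n$ but exponential in $1/\eps$.

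Third, I would plug these into Theorem~\ref{thm:main} with $c=1+O(\eps)$, $\gamma=\Theta(1)$, and $\delta=\eps^2$. The theorem's approximation factor becomes
\[
c\bigl(1+O(\sqrt{\delta/\gamma}+\eps)\bigr) \;=\; (1+O(\eps))(1+O(\eps)) \;=\; 1+O(\eps),
\]
which is the claimed ratio. The amortized update cost becomes
\[
\widetilde O\!\left(\tfrac{1}{\delta^{2}\eps}\,\tau_0(n) + \tfrac{1}{\eps}\,\tfrac{T(n)}{n}\right)
\;=\; \widetilde O\!\left(\tfrac{1}{\eps^{5}}\log^{O(1)}n + \tfrac{1}{\eps}\cdot 2^{O(1/\eps^{2})}\right)
\;=\; O\!\left(2^{O(1/\eps^{2})}\log^{O(1)}n\right),
\]
absorbing the polynomial $1/\eps$ factors into the $2^{O(1/\eps^2)}$ term and the $\widetilde O$ into the $\log^{O(1)}n$ factor. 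The monotonicity and doubling assumptions on $T(n)$ needed by Theorem~\ref{thm:main} are obvious from the form $T(n)=\widetilde O(2^{O(1/\eps^2)}n)$. This yields exactly the statement of the corollary, so the proof is essentially an invocation of the framework plus the two oracles.
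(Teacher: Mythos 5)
Your proposal matches the paper's proof essentially step for step: the same two oracle verifications (Kaplan et al.\ plus Chan's dynamic nearest-neighbor structure for (i), the Appendix~\ref{app:disks} divide-and-conquer \textsf{EPTAS} for (ii)), and the same parameter substitution $c=1+O(\eps)$, $\gamma=\Theta(1)$, $\delta=\eps^2$ into Theorem~\ref{thm:main}. The arithmetic for the approximation ratio and amortized time is carried out correctly, so nothing is missing.
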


\paragraph{Rectangles in $\R^2$.}
For rectangles in $\R^2$, dynamic intersection detection requires $\tau_0(n)=O(\log^{O(1)}n)$ query and update time,
by standard orthogonal range searching techniques (range trees)~\cite{preparata2012computational,AgarwalE99}.

(Note: There is an alternative approach that bypasses Eppstein's technique and directly solves 
the $\DS$ data structure problem:  We dynamically maintain a \emph{biclique cover},
which can be done in polylogarithmic time for rectangles~\cite{Chan06}.  It is then easy to maintain the minimum weight
of each biclique, by maintaining the minimum weight of each of the two sides of the biclique with priority queues.)

In Appendix~\ref{app:rect}, we give 
a static $(\frac32 + O(\eps))$-approximation algorithm for MVC for rectangles, 
with running time $T(n)=\OO(2^{O(1/\eps^2)}n)$.  The algorithm is obtained by
modifying the method by Bar-Yehuda, Hermelin, and Rawitz~\cite{bar2011minimum}, and combining
with our efficient kernelization method.

Applying Theorem~\ref{thm:main} 
with $c=\frac32+O(\eps)$, $\gamma=\Theta(1)$, and $\delta=\eps^2$, 
we obtain our main result for rectangles:

\begin{corollary}\label{cor:rect}
There is a dynamic data structure for $n$ axis-aligned rectangles in $\R^2$
that maintains a $(\frac32+O(\eps))$-approximation of the minimum vertex cover
of the intersection graph, under insertions and deletions,
in $O(2^{O(1/\eps^2)}\log^{O(1)}n)$ amortized time.
\end{corollary}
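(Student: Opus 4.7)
The plan is to derive Corollary~\ref{cor:rect} as a direct instantiation of Theorem~\ref{thm:main} for the class $\CC$ of axis-aligned rectangles in $\R^2$, with parameters $c = \tfrac{3}{2} + O(\eps)$, $\gamma = \Theta(1)$, and $\delta = \eps^2$. The only substantive work is to check that the two oracles required by the theorem are available for rectangles with the right complexity; once these are in hand, the bound follows by plugging the parameters into the generic estimate.

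First I would supply oracle~(i). For rectangles, ``does a query rectangle intersect any stored rectangle?'' is a classical orthogonal range searching question: two rectangles intersect iff the corners of one lie in a particular orthogonal region determined by the other, so the query reduces to several orthogonal range emptiness queries, which a multi-level range tree handles in $\tau_0(n) = O(\log^{O(1)} n)$ amortized time under insertions and deletions. The parenthetical remark in the paper also notes an alternative route through a dynamic biclique cover for rectangles, which would directly implement the $\DS$ structure without invoking Eppstein's reduction; either works, and I would mention both for robustness.

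Next I would invoke oracle~(ii), which is provided by the static $(\tfrac{3}{2} + O(\eps))$-approximation for MVC on rectangles under the promise $n = O(\OPT)$, running in time $T(n) = \OO(2^{O(1/\eps^2)}\, n)$, from Appendix~\ref{app:rect}. This clearly satisfies the monotonicity and doubling hypotheses on $T(n)/n$ required by Theorem~\ref{thm:main}.

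Finally I would instantiate Theorem~\ref{thm:main}. The approximation ratio becomes
\[
c\bigl(1 + O(\sqrt{\delta/\gamma} + \eps)\bigr) \;=\; \bigl(\tfrac{3}{2} + O(\eps)\bigr)\bigl(1 + O(\eps)\bigr) \;=\; \tfrac{3}{2} + O(\eps),
\]
and the amortized update time becomes
\[
\OO\!\left(\tfrac{1}{\delta^2\eps}\,\tau_0(n) + \tfrac{1}{\eps}\,\tfrac{T(n)}{n}\right)
\;=\; \OO\!\left(\tfrac{1}{\eps^5}\log^{O(1)}n + \tfrac{1}{\eps}\,2^{O(1/\eps^2)}\right)
\;=\; O\!\left(2^{O(1/\eps^2)}\log^{O(1)}n\right),
\]
as claimed. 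There is no genuine obstacle here; the only mild point is to rescale $\eps$ at the start so that the $O(\eps)$ slack absorbed in the approximation ratio is exactly $\eps$, and to verify that the parameter choices $\delta = \eps^2$ and $\gamma = \Theta(1)$ satisfy $0 \le \delta < \gamma < \tfrac14$, which they do for sufficiently small $\eps$.
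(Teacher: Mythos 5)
Your proposal is correct and matches the paper's proof essentially verbatim: the same oracle~(i) via dynamic orthogonal range searching (with the same parenthetical remark about the biclique-cover alternative), the same oracle~(ii) via the static algorithm of Appendix~\ref{app:rect}, and the same parameter choices $c=\tfrac32+O(\eps)$, $\gamma=\Theta(1)$, $\delta=\eps^2$ plugged into Theorem~\ref{thm:main}. Nothing further to add.
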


\paragraph{Fat boxes (e.g., hypercubes) in $\R^d$.}

For the case of fat axis-aligned boxes (e.g., hypercubes) in a constant dimension~$d$,
dynamic intersection detection can again be solved with 
$\tau_0(n)=O(\log^{O(1)}n)$ amortized query and update time by orthogonal range searching~\cite{preparata2012computational,AgarwalE99}.
We can design the static algorithm $\AAA$ in exactly the same way as in the case of disks (since the method in Appendix~\ref{app:disks}
holds for fat objects in $\R^d$), achieving $T(n)=\OO(2^{O(1/\eps^d)}n)$.

\begin{corollary}\label{cor:fat:MVC}
There is a dynamic data structure for $n$ fat axis-aligned boxes in $\R^d$ for any constant $d$
that maintains a $(1+O(\eps))$-approximation of the minimum vertex cover
of the intersection graph, under insertions and deletions,
in $O(2^{O(1/\eps^d)}\log^{O(1)}n)$ amortized time.
\end{corollary}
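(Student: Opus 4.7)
The plan is to apply Theorem~\ref{thm:main} with $c=1+O(\eps)$, $\gamma=\Theta(1)$, and $\delta=\eps^2$, mirroring the parameter choices used for disks in Corollary~\ref{disk:MVC}. This reduces the proof to supplying the two oracles required by the theorem: a dynamic intersection-detection structure $\DS_0$ and a static $(1+O(\eps))$-approximation algorithm $\AAA$ that operates under the promise $n \le (2+O(\gamma))\,\OPT$.

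For the dynamic oracle, observe that two axis-aligned boxes in $\R^d$ intersect if and only if their projections overlap on each of the $d$ coordinate axes; this is an orthogonal range-emptiness test on $2d$ coordinates. A standard $2d$-dimensional range tree (with dynamic fractional-cascading or logarithmic-method support for insertions and deletions) yields $\tau_0(n) = O(\log^{O(1)} n)$ amortized time for constant $d$. Note that this construction does not use fatness at all; fatness will be invoked only in ingredient~(ii).

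For the static oracle, the plan is to port the algorithm of Appendix~\ref{app:disks} from disks in $\R^2$ to fat axis-aligned boxes in $\R^d$. That algorithm is a modification of Chan's shifted-grid \textsf{PTAS} for maximum independent set on fat objects~\cite{Chan03}, whose analysis is essentially dimension-agnostic and uses only fatness rather than any specific shape. In $\R^d$ one processes objects in groups of comparable size, overlays a randomly shifted grid of side $\Theta(1/\eps)$ times that scale, and exactly solves MVC inside each cell: fatness of the boxes guarantees that only $2^{O(1/\eps^d)}$ objects of the current scale can meet a cell, so brute force takes $2^{O(1/\eps^d)}$ time per cell. Summed over cells and scales, and using the promise $n=O(\OPT)$ to convert the additive error of the shifting argument into a $(1+O(\eps))$ multiplicative factor, this yields $T(n) = \OO(2^{O(1/\eps^d)}\,n)$.

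Plugging these two ingredients into Theorem~\ref{thm:main} gives amortized update time $\OO((1/\eps^5)\log^{O(1)}n + (1/\eps) \cdot 2^{O(1/\eps^d)}) = O(2^{O(1/\eps^d)}\log^{O(1)}n)$ and approximation ratio $c(1+O(\sqrt{\delta/\gamma}+\eps)) = 1+O(\eps)$, which is exactly the statement. The only nontrivial step is the verification that the machinery of Appendix~\ref{app:disks} transfers verbatim from disks in $\R^2$ to fat boxes in $\R^d$; this is expected because Chan's shifted-grid \textsf{PTAS} was formulated for arbitrary fat objects in arbitrary constant dimension, and the Nemhauser--Trotter-style wrapping used in Appendix~\ref{app:disks} depends only on having a near-linear-time MIS \textsf{PTAS} rather than on the geometry of disks per se. This verification is the main obstacle, but is essentially bookkeeping.
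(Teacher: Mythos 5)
The parameter instantiation and the choice of intersection-detection oracle match the paper, but your account of the static oracle $\AAA$ does not, and the running-time claim you make for it is incorrect.

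First, a factual mismatch: Appendix~\ref{app:disks} is \emph{not} a shifted-grid algorithm. It is a divide-and-conquer algorithm via Smith--Wormald geometric separators (Lemma~\ref{lem:sep}): find a hypercube $B$ splitting the $n$ objects into two parts of size at most $(1-\beta)n$, with at most $O(n^{1-1/d})$ boundary-crossing objects; recurse; at constant size $b$, brute-force. The additive error recurrence $E(n)\le E(n_1)+E(n_2)+O(n^{1-1/d})$ with $n_1,n_2\le(1-\beta)n$ gives $E(n)=O(n/b^{1/d})$; setting $b=1/\eps^d$ makes this $O(\eps n)$, and the base case brute-force is $2^{O(b)}=2^{O(1/\eps^d)}$, yielding $T(n)=\OO(2^{O(1/\eps^d)}n)$.

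Second, and more importantly, the shifted-grid variant you describe does not achieve near-linear running time. You assert that ``fatness of the boxes guarantees that only $2^{O(1/\eps^d)}$ objects of the current scale can meet a cell, so brute force takes $2^{O(1/\eps^d)}$ time per cell.'' Fatness and comparable scale bound the number of \emph{pairwise-disjoint} objects in a cell of side $\Theta(1/\eps)$ times the scale by $O(1/\eps^d)$; they place no bound on the total number of overlapping objects in a cell, which may be as large as $n$. Exact MIS within a cell under a bound of $k=O(1/\eps^d)$ on the solution size costs something like $n^{O(k)}=n^{O(1/\eps^d)}$ by subset enumeration, which is a \textsf{PTAS} but not the near-linear \textsf{EPTAS} required to plug into Theorem~\ref{thm:main}. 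The separator-based recursion is precisely what avoids this issue: its base cases are subproblems with a \emph{constant} number of objects ($n<b$), where $2^{O(b)}$ brute force is legitimate, whereas a grid cell has no such cardinality guarantee. So your static oracle, as described, does not deliver $T(n)=\OO(2^{O(1/\eps^d)}n)$, and the gap is in the heart of the argument rather than in the bookkeeping. Replace the shifted-grid description with the separator-based divide-and-conquer of Appendix~\ref{app:disks} (which genuinely is dimension-agnostic for fat objects) and the rest of your proposal goes through.
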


We can also obtain results for balls or other types of fat objects in $\R^d$, but because intersection detection data structures
have higher complexity, the update time would be sublinear rather than polylogarithmic.

\paragraph{Bipartite disks in $\R^2$.}

For the case of a bipartite intersection graph between two sets of disks in $\R^2$,
we have $\tau_0(n)=O(\log^{O(1)}n)$ as already noted.
In the bipartite case, the static algorithm $\AAA$ is trivial: we just return the smaller of the two parts
in the bipartition, which yields a vertex cover of size at most $n/2\le (1+O(\eps))\,\OPT$ under the
promise that $n\le (2+O(\eps))\,\OPT$.

Applying Theorem~\ref{thm:main} with $c=1+O(\eps)$, $\gamma=\eps$, and $\delta=\eps^3$, we obtain:

\begin{corollary}\label{MVC:bipartite:disk}
There is a dynamic data structure for two sets of $O(n)$ disks in $\R^2$
that maintains a $(1+O(\eps))$-approximation of the minimum vertex cover
of the bipartite intersection graph, under insertions and deletions,
in $O((1/\eps^7)\log^{O(1)}n)$ amortized time.  
\end{corollary}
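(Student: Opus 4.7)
The plan is to obtain the corollary as a direct application of \theoremref{thm:main}, so the only work is to supply its two oracles for bipartite disks in $\R^2$ and then to check that the parameter choices yield the claimed bounds. For oracle~(i), dynamic intersection detection, I would use what has already been invoked earlier in this section: Kaplan et al.'s data structure for dynamic additively weighted Euclidean nearest-neighbor search, which handles dynamic disk intersection detection in $\R^2$ with $\tau_0(n)=O(\log^{O(1)}n)$ amortized query and update time. Nothing new is needed here.

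For oracle~(ii), a static $c$-approximation of MVC under the promise $n\le(2+O(\gamma))\OPT$, I would use the trivial bipartite algorithm: return the smaller of the two color classes. Since the kernel $K$ produced by \lemmaref{lem:kernel} is a vertex subset, the induced graph $G[K]$ inherits the bipartition of the original bichromatic instance, so every edge of $G[K]$ crosses the partition and each color class is itself a vertex cover of $G[K]$. The smaller class has size at most $|K|/2\le(1+O(\gamma))\OPT$, giving $c=1+O(\gamma)$. Choosing $\gamma=\eps$ yields $c=1+O(\eps)$, and this algorithm runs in $T(n)=O(n)$ time via a single linear scan that tags kernel vertices by their side.

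Plugging $c=1+O(\eps)$, $\gamma=\eps$, and $\delta=\eps^3$ into \theoremref{thm:main} yields approximation ratio $c(1+O(\sqrt{\delta/\gamma}+\eps))=(1+O(\eps))(1+O(\eps))=1+O(\eps)$ since $\sqrt{\delta/\gamma}=\sqrt{\eps^2}=\eps$, and amortized update time $\OO((1/(\delta^2\eps))\tau_0(n)+(1/\eps)T(n)/n)=\OO((1/\eps^7)\log^{O(1)}n+1/\eps)=O((1/\eps^7)\log^{O(1)}n)$, matching the statement. There is no real obstacle in this proof; the only subtle point worth pausing on is verifying that bipartiteness is preserved by the MWU step of \lemmaref{lem:mwu} and by the kernelization of \lemmaref{lem:kernel}. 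Both are oblivious to the partition of the vertex set: MWU only manipulates edge-wise weight sums on $G$, and the kernel is simply the vertex subset $\{v:\alpha\le x_v\le1-\alpha\}$, so $G[K]$ inherits the bipartition automatically and the ``trivial'' static algorithm is valid. The rest is routine parameter substitution.
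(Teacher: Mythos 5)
Your proposal matches the paper's own proof exactly: supply Kaplan et al.'s dynamic disk intersection detection for oracle~(i), use the trivial ``return the smaller color class'' algorithm for oracle~(ii), and plug $c=1+O(\eps)$, $\gamma=\eps$, $\delta=\eps^3$ into Theorem~\ref{thm:main}. Your brief check that bipartiteness is preserved through the MWU and kernelization steps (since the kernel is just a vertex subset) is a correct and worthwhile sanity check that the paper leaves implicit, but it is not a different route.
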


\paragraph{Bipartite boxes in $\R^d$.}

For the case of a bipartite intersection graph between two sets of (not necessarily fat) boxes in $\R^d$,
we have $\tau_0(n)=O(\log^{O(1)}n)$  by orthogonal range searching.
As already noted, in bipartite cases, the static algorithm $\AAA$ is trivial.

\begin{corollary}\label{MVC:bipartite:box}
There is a dynamic data structure for two sets of $O(n)$ axis-aligned boxes in $\R^d$ for any constant $d$
that maintains a $(1+O(\eps))$-approximation of the minimum vertex cover
of the bipartite intersection graph, under insertions and deletions,
in $O((1/\eps^7)\log^{O(1)}n)$ amortized time.  
\end{corollary}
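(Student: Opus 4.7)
The plan is to instantiate Theorem~\ref{thm:main} for the bipartite case with axis-aligned boxes, following exactly the template established for the bipartite disk corollary. The two ingredients to supply are (i) a dynamic intersection detection structure $\DS_0$ and (ii) a static approximation algorithm $\AAA$ for the promised instance where $n \le (2+O(\gamma))\,\OPT$.

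For ingredient (i), I would invoke standard multi-level orthogonal range searching on range trees: a box--box intersection query in $\R^d$ reduces to a constant number of orthogonal range queries on the $2d$-dimensional ``interval tuple'' representation of the boxes, and each such structure supports insertions, deletions, and queries in $O(\log^{O(1)} n)$ amortized time. Thus $\tau_0(n) = O(\log^{O(1)} n)$. For ingredient (ii), I would use the same trivial observation as in the bipartite disk case: given a bipartite instance with parts $A$ and $B$ and the promise $|A|+|B|=n\le (2+O(\gamma))\,\OPT$, simply output the smaller of $A,B$. This is a vertex cover of the bipartite intersection graph of size at most $n/2 \le (1+O(\gamma))\,\OPT$, computed in $O(n)$ time, so $T(n) = O(n)$ and $c = 1+O(\gamma)$.

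Next I would choose parameters $\gamma = \eps$ and $\delta = \eps^3$, so that the approximation factor promised by Theorem~\ref{thm:main} becomes $c(1+O(\sqrt{\delta/\gamma}+\eps)) = (1+O(\eps))(1+O(\eps)) = 1+O(\eps)$. The amortized update time given by the theorem is
\[
\widetilde{O}\!\left(\frac{1}{\delta^2\eps}\,\tau_0(n) + \frac{T(n)/n}{\eps}\right)
\;=\; \widetilde{O}\!\left(\frac{1}{\eps^7}\log^{O(1)}n + \frac{1}{\eps}\right)
\;=\; O\!\left((1/\eps^7)\log^{O(1)}n\right),
\]
which is exactly the bound claimed.

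There is essentially no substantial obstacle here beyond checking that range-tree-based intersection detection composes correctly with Eppstein's reduction inside Lemma~\ref{lem:Epp} (which is already handled in the proof of that lemma via multi-level structures), and that the trivial ``smaller side'' algorithm works as the base case. The only mild subtlety is that the bipartition must be preserved through updates, but since the two color classes are given as input and each insertion/deletion is tagged with its side, this is automatic. All other pieces—the MWU-based fractional LP approximation (Lemma~\ref{lem:mwu}), the kernel construction (Lemma~\ref{lem:kernel}), and the rebuild-in-phases scheme inside Theorem~\ref{thm:main}—apply verbatim.
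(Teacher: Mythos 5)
Your proof is correct and follows exactly the paper's route: orthogonal range searching supplies $\tau_0(n)=O(\log^{O(1)}n)$, the ``smaller side of the bipartition'' serves as the trivial static algorithm with $c=1+O(\eps)$ and $T(n)=O(n)$, and Theorem~\ref{thm:main} is applied with $\gamma=\eps$, $\delta=\eps^3$ to get the stated bound. The parameter arithmetic and the resulting $O((1/\eps^7)\log^{O(1)}n)$ amortized update time match the paper's instantiation.
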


We did not write out the number of logarithmic factors in our results, as we have not attempted to optimize them, but
it is upper-bounded by 3 plus the number of logarithmic factors 
in $\tau_0(n)$.

\section{Bipartite Maximum-Cardinality Matching}

The approach in the previous section finds a $(1+O(\eps))$-approximation of the MVC
of bipartite geometric intersection graphs, and so it allows us to approximate the size of the MCM in such bipartite graphs.  However, it doesn't compute a matching.  In this section,
we give a different approach to maintain a $(1+O(\eps))$-approximation of the MCM in bipartite intersection graphs.
The first thought that comes to mind is to compute a kernel, as we have done for MVC, but for MCM,
known approaches seem to yield only a kernel of $O(\OPT^2)$ size (e.g., see Gupta and Peng~\cite{GuptaP13}).  Instead, we will
bypass kernels and construct an approximate MCM directly.

\subsection{Approximate Bipartite MCM via Modified Hopcroft--Karp}

It is well known that Hopcroft and Karp's $O(m\sqrt{n})$-time algorithm for exact MCM in bipartite graphs~\cite{HopcroftK73}
can be modified to give a $(1+\eps)$-approximation algorithm that runs in near-linear time, simply
by terminating early after $O(1/\eps)$ iterations (for example, see the introduction in~\cite{DuanP10}).
We describe a way to reimplement the algorithm in sublinear time
when $\OPT$ is small by using appropriate data structures, which correspond to dynamic intersection detection in
the case of geometric intersection graphs.  
Note that earlier work by Efrat, Itai, and Katz~\cite{efrat2001geometry} has already combined Hopcroft and Karp's algorithm with geometric data structures to obtain static exact algorithm~\cite{efrat2001geometry} for
maximum matching in bipartite geometric intersection graphs (see also Har-Peled and Yang's paper~\cite{Har-PeledY22} on static approximation algorithms).
However, to achieve bounds sensitive to $\OPT$, our algorithm will work differently (in particular, it will be DFS-based instead of BFS-based).

\newcommand{\maximalaugpaths}{\textsc{maximal-aug-paths}}
\newcommand{\visit}{\textsc{visit}}
\newcommand{\extend}{\textsc{extend}}

\begin{lemma}\label{lem:mcm}
We are given an unweighted bipartite graph $G=(V,E)$.  Suppose there is a data structure $\DS_0$ for storing a subset $S\subseteq V$ of vertices, initially with $S=\emptyset$, that supports the following two operations in $\tau_0$ time: given a vertex $u\in V$, find a neighbor of $u$ that is in $S$ (if exists), and insert/delete a vertex to/from~$S$.

Given a data structure $\DS_0$ that currently has $S=V$, and given a maximal matching $M_0$,
we can compute a $(1+O(\eps))$-approximation to the maximum-cardinality matching in  $\OO((1/\eps^2)\,\OPT\cdot \tau_0)$ time.  Here, $\OPT$ denotes the maximum matching size.
\end{lemma}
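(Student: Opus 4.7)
The plan is to adapt Hopcroft and Karp's classical bipartite matching algorithm to a DFS-based procedure that uses $\DS_0$ as its sole oracle for neighbor access, with the running time sensitive to $\OPT$ rather than to the graph size. Starting from $M \leftarrow M_0$, I run $K = \Theta(1/\eps)$ phases; in phase $k$ I find a maximal family of vertex-disjoint augmenting paths of length at most $2k+1$ and augment $M$ along each. The approximation guarantee is the standard approximate Hopcroft--Karp bound: if $M$ admits no augmenting path of length at most $2K+1$, then the symmetric-difference argument against any optimum matching $M^*$ gives $|M^*| - |M| \le |M^*|/(K+2) = O(\eps)\,\OPT$, so $M$ is a $(1+O(\eps))$-approximation to MCM.

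Inside each phase I would organize the search around three mutually recursive pieces, \textsc{maximal-aug-paths}, \textsc{visit}, and \textsc{extend}. \textsc{maximal-aug-paths}$(k)$ iterates over the unmatched left vertices $u$ and invokes \textsc{visit}$(u, 0)$. \textsc{visit}$(x, d)$ at a left vertex $x$ returns failure once $d \ge 2k$; otherwise it calls \textsc{extend}$(x, d)$, which repeatedly queries $\DS_0$ for a right neighbor $y$ of $x$ currently in $S$. Whenever $\DS_0$ returns such a $y$, we delete $y$ from $S$ (and do not reinsert it until the end of the phase, when $S$ is restored by reinserting all deleted vertices); if $y$ is unmatched in $M$, \textsc{extend} returns the augmenting path, otherwise it recurses into \textsc{visit} at the match of $y$ with depth $d+2$, and on failure loops back to request another neighbor from $\DS_0$. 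The ``don't revisit'' optimization is correct by the usual Hopcroft--Karp observation: once the first DFS entry into $y$ fails from the match of $y$ with the remaining depth budget, any later entry into $y$ has no larger budget and fails too.

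For the running time the useful work per phase is easy to account: each augmenting path found contributes $O(1/\eps)$ oracle queries, and at most $O(\OPT)$ augmentations happen per phase, for a total of $O(\OPT/\eps)$ useful queries. The main obstacle is bounding the wasted work, which I plan to control via two observations. First, $M$ remains maximal after every augmentation (augmentation along an alternating path preserves maximality), so the DFS from any unmatched left vertex must first enter a currently matched right vertex; since there are at most $\OPT$ matched right vertices and each is deleted from $S$ upon its first DFS visit, the total right-visit work in a phase is $O(\OPT)$ queries. Second, to avoid blindly iterating over $\Theta(n)$ unmatched left vertices per phase, I maintain an auxiliary list of unmatched left vertices that currently have at least one matched right neighbor in $S$, updated reactively as $S$ and $M$ change; the no-op DFS invocations are then charged against the $O(\OPT)$ insertions and deletions in $S$ per phase. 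Combining these, each phase costs $\OO(\OPT/\eps \cdot \tau_0)$, and summing over $K = O(1/\eps)$ phases gives the claimed $\OO((1/\eps^2)\,\OPT \cdot \tau_0)$ bound.
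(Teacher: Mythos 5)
Your high-level plan matches the paper's: a DFS-based reimplementation of approximate Hopcroft--Karp, running $\Theta(1/\eps)$ phases and using $\DS_0$ as a neighbor oracle, aiming for time sensitive to $\OPT$ rather than $n$. Your approximation analysis via symmetric difference is fine. But there are two genuine gaps in the implementation, and the paper's proof contains the key idea that you are missing.

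The first, more serious gap is your ``don't revisit'' rule. You delete a right vertex $y$ from $S$ on its first DFS visit and never reconsider it in that phase, asserting that any later entry into $y$ has a no-larger remaining budget. That claim is only true if the first visit to $y$ occurs at minimum alternating depth, which classical Hopcroft--Karp guarantees by running a level-building BFS first. Your algorithm does DFS without first computing levels, so the first visit to $y$ can be at a deeper position than some later attempted visit. Concretely: if exposed vertex $u_1$ reaches $y$ at depth $3$ and the extension fails, you delete $y$ from $S$; but a different exposed vertex $u_2$ with a direct edge to $y$ would reach it at depth $1$ with more budget, and you would miss that augmenting path, so the collection $\Gamma$ you build may fail to be maximal and the Hopcroft--Karp length-progress invariant breaks. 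The paper resolves this precisely by replicating $V_M$ into $\ell$ layered sets $S_1,\ldots,S_\ell$ and deleting a failed vertex only from the layer $S_i$ at which it failed; the vertex stays available at shallower layers $j<i$ with a larger budget. This layering is what makes the deletion sound, and it is the correct substitute for a BFS levelling step.

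The second gap is how you enumerate starting points. You iterate over unmatched left vertices, of which there may be $\Theta(n)$, and to avoid a linear scan you propose an ``auxiliary list of unmatched left vertices that currently have at least one matched right neighbor in $S$, updated reactively.'' You do not specify how this list is maintained, and it is not clear that it can be with only the interface that $\DS_0$ exposes: when a matched right vertex $y$ is deleted from $S$ you would need to identify all exposed left vertices for which $y$ was the last surviving neighbor, which is a reverse-neighborhood query $\DS_0$ does not support. The paper sidesteps this entirely by making the DFS start from the $O(\OPT)$ matched vertices $V_M$ (iterating $u_1 \in S_1 = V_M$ and then asking $\DS_0$ for an exposed neighbor $v_0$), so the number of starting points is automatically $O(\OPT)$. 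This is the reason the paper phrases its outer loop the way it does, and it is the clean way to get the $\OPT$-sensitive bound without an unspecified auxiliary structure.
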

\begin{proof}
The algorithm proceeds iteratively.  We maintain a matching $M$.
At the beginning of the $\ell$-th iteration, we know that the current matching $M$ does not
have augmenting paths of length $\le 2\ell-1$.
We find a maximal collection $\Gamma$ of vertex-disjoint augmenting paths
of length $2\ell+1$.  We then augment $M$ along the paths in $\Gamma$.
As shown by Hopcroft and Karp~\cite{HopcroftK73}, the new matching $M$ will then not have augmenting paths of length $\le 2\ell+1$.

As shown by Hopcroft and Karp~\cite{HopcroftK73}, there are at least $\OPT-|M|$ vertex-disjoint augmenting paths, and so
$|M|\ge (\ell+1)(\OPT-|M|)$, i.e., $\OPT\le (1+\frac{1}{\ell+1})|M|$.  Thus, once $\ell$ reaches $\Theta(1/\eps)$,
we may terminate the algorithm.  Initially, we can set $M=M_0$ before the first iteration.

It suffices to describe how to find a maximal collection $\Gamma$ of vertex-disjoint augmenting paths
of length $2\ell+1$ in the $\ell$-th iteration, under the assumption that there are no shorter augmenting paths.
Hopcroft and Karp originally proposed a BFS approach, starting at the ``exposed'' vertices not covered by the current matching $M$.  Unfortunately, this approach does not work in our setting:
because we want the running time to be near $\OPT$, the searches need to start at vertices of $M$.
We end up adopting a DFS approach, but the vertices of $M$ need to be duplicated $\ell$ times in $\ell$ ``layers'' (this increases the running time by a factor of $\ell$, but fortunately, $\ell$ is small in our setting). 

Let $V_M$ be the $2|M|$ vertices of the current matching $M$.  As is well known, $\OPT\le 2|M|$.
A walk $v_0u_1v_1\cdots u_\ell v_\ell u_{\ell+1}$ in $G$ is an \emph{augmenting walk} of length $2\ell+1$
if $v_0\not\in V_M$, $v_0u_1\not\in M$, $u_1v_1\in M$, \ldots, $v_\ell u_{\ell+1}\not\in M$, and $u_{\ell+1}\not\in V_M$.
In such an augmenting walk of length $\ell$, we automatically have $v_0\neq u_{\ell+1}$ (because $G$ is bipartite)
and the walk must automatically be a simple path (because otherwise we could short-cut and obtain an augmenting path of length
$\le 2\ell-1$).
In the procedure $\extend(v_0u_1v_1\cdots u_i,\ell)$ below, the input is a walk $v_0u_1v_1\cdots u_i$ 
with $v_0u_1\not\in M$, $u_1v_1\in M$, \ldots, $v_{i-1} u_i\not\in M$, and the output is true if it is possible
to extend it to an augmenting walk $v_0u_1v_1\cdots u_\ell v_\ell u_{\ell+1}$  of length $2\ell+1$ that
is vertex-disjoint from the augmenting walks generated so far.

\begin{quote}
\begin{tabbing}
9.\ \ \ \= for \= for \= for \=\kill
$\maximalaugpaths(\ell)$:\\[.25ex]
1.\> let $S=V\setminus V_M$ and $S_1=\cdots=S_\ell=V_M$\\
2.\> for each $u_1\in S_1$ do\\
3.\>\> let $v_0$ be a neighbor of $u_1$ with $v_0\in S$\\
4.\>\> if $v_0$ does not exist then delete $u_1$ from $S_1$\\
5.\>\> else $\extend(v_0u_1,\ell)$\\[2ex]
$\extend(v_0u_1v_1\cdots u_i,\ell)$:\\[.25ex]
1.\> let $v_i$ be the partner of $u_i$ in $M$\\
2.\> if $i=\ell$ then\\
3.\>\> let $u_{\ell+1}$ be a neighbor of $v_\ell$ with $u_{\ell+1}\in S$\\
4.\>\> if $u_{\ell+1}$ does not exist then delete $u_\ell$ from $S_\ell$ and return false\\
5.\>\> output the augmenting path $v_0u_1v_1\cdots u_\ell v_\ell u_{\ell+1}$\\
6.\>\> delete $v_0$ and $u_{\ell+1}$ from $S$, and $u_1,\ldots,u_\ell$ from all of $S_1,\ldots,S_\ell$, and return true\\
7.\> for each neighbor $u_{i+1}$ of $v_i$ with $u_{i+1}\in S_{i+1}\setminus\{u_i\}$ do\\
8.\>\> if $\extend(v_0u_1v_1\cdots u_{i+1},\ell)=$ true then return true\\
9.\> delete $u_i$ from $S_i$ and return false
\end{tabbing}
\end{quote}

Note that if it is not possible to extend the walk $v_0u_1v_1\cdots u_i$ to an augmenting walk of length $2\ell+1$,
then it is not possible to extend any other walk $v_0'u_1'v_1'\cdots u_i'$ of the same length with $u_i'=u_i$.
This justifies why we may delete $u_i$ from $S_i$ in line~9 of $\extend$ (and similarly why we may delete $u_\ell$ from $S_\ell$
in line~4 of $\extend$, and why we may delete $u_1$ from $S_1$ in line~4 of $\maximalaugpaths$).

For the running time analysis, note that each vertex may be a candidate for $u_i$ in only one call to $\extend$ per $i$,
because we delete $u_i$ from $S_i$, either in line~9 if false is returned, or in line~6  if true is returned.
Thus, the number of calls to $\extend$ is at most $O(\ell|M|)$.

We use the given data structure $\DS_0$ to maintain $S$.  This allows us to do line~3 of $\maximalaugpaths$.
Line~1 of $\maximalaugpaths$ requires $O(|M|)$ initial deletions from $S$.  At the end, we 
reset $S$ to $V$ by performing $O(|M|)$ insertions of the deleted elements.

We also maintain $S_1,\ldots,S_\ell$ in $\ell$ new instances of the data structure $\DS_0$.
This allows us to do lines 3 and 7 of $\extend$.  Line~1 of $\maximalaugpaths$ requires $O(\ell|M|)$ initial
insertions to $S_1,\ldots,S_\ell$.
We conclude that $\maximalaugpaths$ takes $O(\ell |M|\cdot\tau_0)$ time.
Hence, the overall running time of all $\ell=\Theta(1/\eps)$ iterations is $O((1/\eps^2)|M|\cdot\tau_0)$.
\IGNORE{
\begin{quote}
\begin{tabbing}
9.\ \ \= for \= for \= for \=\kill
$\maximalaugpaths(\ell)$:\\[.25ex]
1.\> let $S=V\setminus V_M$ and $S_1=\cdots=S_\ell=V_M$\\
2.\> for each $u_1\in S_1$ do\\
3.\>\> let $v_0$ be a neighbor of $u_1$ with $v_0\in S$\\
4.\>\> if $v_0$ does not exist then delete $u_1$ from $S_1$\\
5.\>\> else if $\visit(u_1,1)=$ true then output $v_0u_1$, delete $v_0$ from $S$, and return true\\[2ex]
$\visit(u_i,i)$:\\[.25ex]
1.\> let $v_i$ be the partner of $u_i$ in $M$\\
2.\> if $i=\ell$ then\\
3.\>\> let $u_{\ell+1}$ be a neighbor of $v_\ell$ with $u_{\ell+1}\in S$\\
4.\>\> if $u_{\ell+1}$ does not exist then delete $u_\ell$ from $S_\ell$ and return false\\
5.\>\> output $u_\ell v_\ell$ and $v_\ell u_{\ell+1}$, delete $u_{\ell+1}$ from $S$ and $u_\ell$ from $S_1,\ldots,S_\ell$, and return true\\
6.\> for each neighbor $u_{i+1}$ of $v_i$ with $u_{i+1}\in S_{i+1}\setminus\{u_i\}$ do\\
7.\>\> if $\visit(u_{i+1},i+1)=$ true then output $u_i v_i$ and $v_i u_{i+1}$, delete $u_i$ from $S_1,\ldots,S_\ell$, and return true\\
8.\> delete $u_i$ from $S_i$ and return false
\end{tabbing}
\end{quote}
}
\end{proof}

\subsection{Dynamic Geometric Bipartite MCM} 

To solve dynamic bipartite MCM for geometric intersection graphs, we again use a standard idea for dynamization: be lazy, and periodically recompute the solution only after every $\eps\,\OPT$ updates
(since the optimal size changes by at most 1 per update).

\begin{theorem}\label{thm:main:mcm}
Let $\CC$ be a class of geometric objects, where there is
a dynamic data structure $\DS_0$ for $n$ objects in $\CC$ that
can find an object 
intersecting a query object (if exists), and supports insertions and deletions of objects, with $O(\tau_0(n))$ query and update time.

Then there is a dynamic data structure for two sets of $O(n)$ objects in $\CC$ that maintains a $(1+O(\eps))$-approximation
of the maximum-cardinality matching in the bipartite intersection graph, under insertions and deletions of objects,
in $\OO((1/\eps^3) \tau_0(n))$ amortized time.
\end{theorem}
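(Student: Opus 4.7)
The plan is to prove Theorem~\ref{thm:main:mcm} by a periodic-rebuilding dynamization of Lemma~\ref{lem:mcm}, following the same template as the proof of Theorem~\ref{thm:main}. Guess a value $b$ with $b\le\OPT<2b$ by running the algorithm in parallel for each $b$ that is a power of $2$. Divide the update sequence into \emph{phases} of $\eps b$ updates each. At the beginning of a phase, invoke Lemma~\ref{lem:mcm} to recompute a $(1+O(\eps))$-approximate maximum matching $M$ of the current bipartite intersection graph, at cost $\OO((1/\eps^2)\,\OPT\cdot \tau_0(n)) = \OO((1/\eps^2)\,b\cdot \tau_0(n))$. Amortized over the $\eps b$ updates of the phase, this contributes $\OO((1/\eps^3)\tau_0(n))$ to the per-update cost.

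Between rebuilds, handle updates lazily. On an insertion, do nothing to $M$; on a deletion, remove from $M$ any edge incident to the deleted object. Because $\OPT$ changes by at most $1$ per update, after at most $\eps b$ lazy updates the stored matching $M$ still has size at least $(1-O(\eps))\OPT$ and remains a valid matching of the current graph, so the $(1+O(\eps))$-approximation guarantee is preserved.

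To feed Lemma~\ref{lem:mcm} at the start of each phase, two things must be available: a data-structure instance $\DS_0^V$ with $S=V$ (the full current vertex set), and a maximal matching $M_0$ of the current graph. For the first, I would maintain one persistent $\DS_0^V$ holding every currently alive object, updating it at cost $O(\tau_0(n))$ per insertion/deletion; because the algorithm in Lemma~\ref{lem:mcm} restores $S=V$ after it runs, this single instance can be reused across phases. For the second, I would maintain a maximal matching $M_0$ greedily using two auxiliary $\DS_0$ instances storing the currently unmatched objects of each color. On inserting $u$, query the opposite-color instance for an unmatched neighbor and, if one exists, match them; on deleting $u$, if $u$ is matched to some $v$ in $M_0$, remove $uv$ from $M_0$ and query the opposite-color instance for an unmatched neighbor of $v$ to rematch. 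These operations keep $M_0$ maximal (every edge always has a matched endpoint) and cost $O(\tau_0(n))$ per update.

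The main obstacle I expect is simply bookkeeping: verifying that maintaining $\DS_0^V$, the two auxiliary $\DS_0$ instances for the maximal matching, and the lazy repair of $M$ all fit within the claimed $\OO((1/\eps^3)\tau_0(n))$ budget, and that the guessing over powers of $2$ composes correctly when $\OPT$ crosses the band $[b,2b)$ (standard: when the current phase's guess becomes invalid, the parallel copy for the correct $b$ takes over, incurring only a constant-factor overhead). Once these routine details are checked, summing the contributions of recomputation ($\OO((1/\eps^3)\tau_0(n))$ amortized), maximal-matching maintenance ($O(\tau_0(n))$), and lazy updates ($O(1)$ per operation) yields the stated $\OO((1/\eps^3)\tau_0(n))$ amortized bound.
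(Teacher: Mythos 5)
Your proposal is correct and follows essentially the same route as the paper's proof of Theorem~\ref{thm:main:mcm}: periodic rebuilding every $\eps b$ updates via Lemma~\ref{lem:mcm}, lazy handling of updates between rebuilds (do nothing on insert, drop the incident matched edge on delete), a persistent $\DS_0$ instance for $S=V$ reused across phases, greedy maintenance of the maximal matching $M_0$ through a $\DS_0$ of unmatched vertices with rematch-on-delete, and parallel guessing of $b$ over powers of two. The only superficial difference is that you split the unmatched set by color into two $\DS_0$ instances, while the paper treats it abstractly as one; this is immaterial.
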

\begin{proof}
First, observe that we can maintain a maximal matching $M_0$ with $O(\tau_0(n))$ update time:
We maintain the subset $S$ of all vertices not in $M_0$ in a data structure $\DS_0$.
When we insert a new object $u$, we match it with an object in $S$ intersecting $u$ (if exists) by querying $\DS_0$.
When we delete an object $u$, we delete $u$ and its partner $v$ in $M_0$, and reinsert $v$.

Assume that $b\le\OPT < 2b$ for a given parameter $b$.
Divide into phases with $\eps b$ updates each.
At the beginning of each phase, compute a $(1+O(\eps))$-approximation of the maximum-cardinality matching
by Lemma~\ref{lem:mcm} in $\OO((1/\eps^2)b\cdot \tau_0(n))$ time.
(It is important to note that we don't rebuild the data structure $\DS$ for $S=V$
at the beginning of each phase; we continue using the same structure $\DS$ in the next phase, after resetting
the modified $S$ back to $V$.)

Since the above is done only at the beginning of each phase, the amortized cost per update is
$\OO(\frac{(1/\eps^2)b\cdot \tau_0(n)}{\eps b})$.

During a phase, we handle an object insertion simply by doing nothing, and we handle an object deletion
simply by removing its incident edge (if exists) from the current matching.  This incurs an additive error at most $O(\eps b)=O(\eps\,\OPT)$.
We also perform the insertion/deletion of the object in $\DS_0$ for $S=V$, in $\OO(\tau_0(n))$ time.

How do we obtain a correct guess $b$?  We build the above data structure
for each $b$ that is a power of 2, and run the algorithm simultaneously for each $b$.
\end{proof}

\subsection{Specific Results}

Recall that for disks in $\R^2$ as well as boxes in $\R^d$,
we have $\tau_0(n)=O(\log^{O(1)}n)$.
(Note that most data structures for intersection detection can be modified to report a witness
object intersecting the query object if the answer is true.)
Thus, we immediately obtain:

\begin{corollary}\label{cor:mcm:disk}
There is a dynamic data structure for two sets of $O(n)$ disks in $\R^2$
that maintains a $(1+O(\eps))$-approximation of the maximum-cardinality matching
in the bipartite intersection graph, under insertions and deletions,
in $O((1/\eps^3)\log^{O(1)}n)$ amortized time.  
\end{corollary}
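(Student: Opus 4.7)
The plan is to apply Theorem~\ref{thm:main:mcm} directly to the class $\CC$ of disks in $\R^2$. The only thing that needs to be checked is that disks admit a fully dynamic data structure $\DS_0$ supporting insertions, deletions, and queries that, given a query disk, \emph{report} (not merely detect) an intersecting disk from the current set in $\tau_0(n)=O(\log^{O(1)} n)$ amortized time per operation.

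For this, I would invoke the standard reduction from disk intersection to additively weighted Euclidean nearest-neighbor search: two disks with centers $c_i$ and radii $r_i$ intersect iff $\|c_1 - c_2\| \le r_1 + r_2$, so a query disk with center $q$ and radius $\rho$ is intersected by some stored disk iff the minimum of $\|c - q\| - r$ over stored pairs $(c,r)$ is at most $\rho$. Kaplan et al.~\cite{KaplanKKKMRS22}, extending Chan's dynamic nearest-neighbor structure, provide a fully dynamic data structure for additively weighted Euclidean nearest-neighbor search in $\R^2$ with polylogarithmic amortized update and query time; since the nearest-neighbor query naturally returns the realizing object, we automatically recover a witness intersecting disk whenever one exists, as noted parenthetically in the paragraph preceding the corollary. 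This gives $\tau_0(n)=O(\log^{O(1)} n)$ for disks in $\R^2$.

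Substituting this bound into the $\OO((1/\eps^3)\tau_0(n))$ guarantee from Theorem~\ref{thm:main:mcm} yields amortized update time $O((1/\eps^3)\log^{O(1)} n)$, which matches the claim. Since the entire derivation is a bookkeeping application of Theorem~\ref{thm:main:mcm} together with a cited black-box data structure, there is no real obstacle here; the only subtlety worth flagging is that the nearest-neighbor structure must be used in its \emph{witness-returning} mode rather than as a pure emptiness oracle, because Lemma~\ref{lem:mcm} genuinely needs to traverse alternating paths through the intersection graph, not merely test emptiness. That requirement is already built into the reduction above, so the corollary follows.
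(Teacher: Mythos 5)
Your proposal is correct and matches the paper's own proof essentially verbatim: both apply Theorem~\ref{thm:main:mcm} with the Kaplan et al.\ additively weighted Euclidean nearest-neighbor structure supplying $\tau_0(n)=O(\log^{O(1)}n)$ for disks, and both flag that the structure must return a witness, not merely detect emptiness. Nothing further to add.
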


\begin{corollary}\label{cor:mcm:box}
There is a dynamic data structure for two sets of $O(n)$ axis-aligned boxes in $\R^d$ for any constant $d$
that maintains a $(1+O(\eps))$-approximation of the maximum-cardinality matching
in the bipartite intersection graph, under insertions and deletions,
in $O((1/\eps^3)\log^{O(1)}n)$ amortized time.  
\end{corollary}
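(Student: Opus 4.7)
The plan is to apply Theorem~\ref{thm:main:mcm} directly, with $\CC$ being the class of axis-aligned boxes in $\R^d$. By the theorem, it suffices to exhibit a dynamic data structure $\DS_0$ that, given a query box, can report an input box intersecting it (if one exists), and supports insertions and deletions, all in $\tau_0(n)=O(\log^{O(1)} n)$ time; plugging this into the theorem immediately yields the claimed $\OO((1/\eps^3)\log^{O(1)} n) = O((1/\eps^3)\log^{O(1)} n)$ amortized update time.

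First, I would recall that two axis-aligned boxes $[a_1,b_1]\times\cdots\times[a_d,b_d]$ and $[a_1',b_1']\times\cdots\times[a_d',b_d']$ intersect iff $a_i\le b_i'$ and $a_i'\le b_i$ for every coordinate~$i$. Thus detecting intersection with a query box reduces to a $2d$-dimensional orthogonal range search over the $2d$-tuples $(a_1,b_1,\dots,a_d,b_d)$ associated with the stored boxes. Using a standard dynamic $2d$-level range tree (as cited in the text via \cite{preparata2012computational,AgarwalE99}), such queries and insertions/deletions can be performed in polylogarithmic time, as already observed in Section~\ref{sec:specific} for the MVC results on boxes.

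The only additional thing needed, beyond what was used for MVC, is that $\DS_0$ should return a \emph{witness} rather than only a yes/no answer. This is a routine augmentation: each canonical subset stored at a node of the range tree can be equipped with a pointer to an arbitrary representative element; when the query recursion identifies a nonempty canonical subset contributing to the positive answer, we simply output that representative. Deletions of the representative are handled by replacing it with any other element of the canonical subset, and if the subset becomes empty the pointer is nullified. This costs only an additional constant factor per level, so $\tau_0(n) = O(\log^{O(1)} n)$ is preserved.

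I do not foresee a genuine obstacle here: the only mildly delicate point is verifying that the witness-returning augmentation composes cleanly across the $2d$ levels of the range tree and that it survives amortized rebuilding (if any is used in the underlying dynamic structure). Once that is in place, Theorem~\ref{thm:main:mcm} gives the stated corollary without further work.
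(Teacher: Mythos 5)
Your proposal is correct and matches the paper's proof exactly: both apply Theorem~\ref{thm:main:mcm} with $\DS_0$ implemented via dynamic orthogonal range searching (range trees) to get $\tau_0(n)=O(\log^{O(1)}n)$, and both note that the structure can be augmented to return a witness object. The paper states this more tersely, but the reasoning is the same.
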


\newcommand{\ZZ}{\mathcal{Z}}
\newcommand{\HH}{\mathcal{H}}

\newcommand{\GENERALMATCHINGSECTION}{

\section{General Maximum-Cardinality Matching}\label{sec:gen:match}

In this section, we adapt our results for bipartite MCM to
general MCM\@.  We use a simple idea
by Lotker, Patt-Shamir, and Pettie~\cite{LotkerPP15} to reduce the problem for general graphs (in the
approximate setting) to finding maximal collections of short augmenting paths in \emph{bipartite} graphs, which we already know how to solve.
(See also Har-Peled and Yang's paper~\cite{Har-PeledY22}, which applied Lotker et al.'s idea to obtain static approximation algorithms for  geometric intersection graphs.)  
The reduction has exponential dependence in the path length $\ell$ (which is fine since $\ell$ is small), and is originally randomized.
We reinterpret their idea in terms of \emph{color-coding}~\cite{AlonYZ95},
which allows for efficient derandomization, and also simplifies the analysis (bypassing Chernoff-bound calculations).  With this reinterpretation, it is easy to show that the idea carries over to the
dynamic setting.

We begin with a lemma, which is a consequence of the standard color-coding technique:

\begin{lemma}\label{lem:color:cod}
For any $n$ and $\ell$, there exists a collection $\ZZ^{(n,\ell)}$ of $O(2^{O(\ell)}\log n)$ subsets
of $[n] := \{1,\ldots,n\}$ such that
for any two disjoint sets $A,B\subseteq [n]$ of total size at most $\ell$, we have 
$A\subseteq Z$ and $B\subseteq [n]\setminus Z$ for some $Z\in\ZZ$.
Furthermore, $\ZZ^{(n,\ell)}$ can be constructed in $O(2^{O(\ell)}n\log n)$ time.
\end{lemma}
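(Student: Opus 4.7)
The plan is to derive $\ZZ^{(n,\ell)}$ from a standard $(n,\ell)$-perfect hash family, applying the usual color-coding trick. Recall that an $(n,\ell)$-perfect hash family is a collection $\HH$ of functions $h:[n]\to[\ell]$ such that for every $S\subseteq[n]$ with $|S|\le\ell$, at least one $h\in\HH$ is injective on $S$. By the construction of Alon, Yuster, and Zwick~\cite{AlonYZ95} (or by the explicit splitters of Naor, Schulman, and Srinivasan), such a family of size $|\HH|=2^{O(\ell)}\log n$ can be built in $O(2^{O(\ell)} n\log n)$ time.

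Given such an $\HH$, the first step is to define
\[
\ZZ^{(n,\ell)} \;=\; \bigl\{\, h^{-1}(T) \,:\, h\in\HH,\ T\subseteq[\ell]\,\bigr\}.
\]
The size bound $|\ZZ^{(n,\ell)}|\le|\HH|\cdot 2^{\ell}=2^{O(\ell)}\log n$ follows immediately. For the running time, I would precompute the tables $h^{-1}(\{i\})$ for each $h\in\HH$ and each $i\in[\ell]$ by scanning $[n]$ once per $h$ (taking $O(|\HH|\cdot n)$ time), and then form each $h^{-1}(T)$ by combining these $\ell$ buckets, giving $O(|\HH|\cdot 2^{\ell}\cdot n)=O(2^{O(\ell)}n\log n)$ total.

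The second step is to verify the covering property. Given disjoint $A,B\subseteq[n]$ with $|A|+|B|\le\ell$, choose $h\in\HH$ that is injective on the set $A\cup B$ (which exists since $|A\cup B|\le\ell$ and $\HH$ is perfect). Set $T:=h(A)\subseteq[\ell]$ and let $Z:=h^{-1}(T)\in\ZZ^{(n,\ell)}$. By construction $A\subseteq Z$. On the other hand, if some $b\in B$ lay in $Z$, then $h(b)\in T=h(A)$, so $h(b)=h(a)$ for some $a\in A$; but $a\ne b$ (as $A,B$ are disjoint) contradicts injectivity of $h$ on $A\cup B$. Hence $B\subseteq[n]\setminus Z$, as required.

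There is no real obstacle here since the lemma is a direct packaging of the color-coding methodology; the only ``nontrivial'' ingredient is invoking an off-the-shelf deterministic perfect hash family with the quoted size and construction time. I would cite \cite{AlonYZ95} for both the construction and the derandomization, and note that the $2^{O(\ell)}$ dependence is acceptable because $\ell=O(1/\eps)$ in the intended application to non-bipartite MCM.
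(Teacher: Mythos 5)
Your proof is correct and takes essentially the same approach as the paper: both build $\ZZ^{(n,\ell)}$ by taking preimages $h^{-1}(T)$ over all $h$ in an $(n,\ell)$-perfect hash family from Alon--Yuster--Zwick and all $T\subseteq[\ell]$, and both verify the covering property by choosing $h$ injective on $A\cup B$ and setting $T=h(A)$.
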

\begin{proof}
As shown by Alon, Yuster, and Zwick~\cite{AlonYZ95}, there exists a collection $\HH^{(n,\ell)}$
of $O(2^{O(\ell)}\log n)$ mappings $h:[n]\rightarrow [\ell]$,
such that for any set $S\subseteq [n]$ of size at most $\ell$,
the elements in $\{h(v): v\in S\}$ are all distinct.  Furthermore, $\HH^{(n,\ell)}$ can be constructed
in $O(2^{O(\ell)}n\log n)$ time.  (This is related to the notion of ``$\ell$-perfect hash family''.)

For each $h\in\HH^{(n,\ell)}$ and each subset $I\subseteq [\ell]$,
add the subset $Z_{h,I}=\{v\in[n]: h(v)\in I\}$ to $\ZZ^{(n,\ell)}$.
The number of subsets is $|\HH^{(n,\ell)}|\cdot 2^\ell \le 2^{O(\ell)}\log n$.
For any two disjoint sets $A,B\subseteq[n]$ of total size at most~$\ell$,
let $h\in\HH^{(n,\ell)}$ be such that the elements in $\{h(a): a\in A\}$ and $\{h(b): b\in B\}$
are all distinct,
and let $I=\{h(a): a\in A\}$; then $A\subseteq Z_{h,I}$
and $B\subseteq [n]\setminus Z_{h,I}$.
\end{proof}

We now present a non-bipartite analog of Lemma~\ref{lem:mcm}:

\begin{lemma}\label{lem:mcm:gen}
We are given an unweighted graph $G=(V,E)$, with $V=[n]$.  Let $\ZZ^{(n,1/\eps)}$ be as in Lemma~\ref{lem:color:cod}.
 Suppose there is a data structure $\DS_0^*$ for storing a subset $S\subseteq V$ of vertices, initially with $S=\emptyset$, that supports the following two operations in $\tau_0$ time: given a vertex $u\in V$ and $Z\in\ZZ^{(n,1/\eps)}$, find a neighbor of $u$ that is in $S\cap Z$
(if exists) and a neighbor of $u$ that is in $S\setminus Z$ (if exists); and insert/delete a vertex to/from~$S$.

Given a data structure $\DS_0$ that currently has $S=V$, and given a maximal matching $M_0$,
we can compute a $(1+O(\eps))$-approximation to the maximum-cardinality matching in  $\OO(2^{O(1/\eps)}\,\OPT\cdot \tau_0)$ time.  Here, $\OPT$ denotes the maximum matching size.
\end{lemma}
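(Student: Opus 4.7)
My plan is to mirror the proof of Lemma~\ref{lem:mcm} while using color-coding to reduce the non-bipartite case to a family of bipartite subproblems. Maintain a matching $M$, initialized to $M_0$, and proceed in $L=\Theta(1/\eps)$ outer iterations; in iteration $\ell$, loop over each $Z\in\mathcal{Z}^{(n,1/\eps)}$ and run a modified version of $\maximalaugpaths$ that searches only for ``$Z$-consistent'' length-$(2\ell+1)$ augmenting paths $v_0u_1v_1\cdots u_\ell v_\ell u_{\ell+1}$ in which every $u_i$ lies in $Z$ and every $v_j$ lies in $[n]\setminus Z$ (with a symmetric variant for the opposite parity). All neighbor lookups are routed through $\DS_0^*$, which returns neighbors restricted to $S\cap Z$ or $S\setminus Z$ as the current search step requires. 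The DFS bookkeeping (the exposed-vertex set $S$, the per-layer sets $S_1,\ldots,S_\ell$, and the deletion rules that justify maximality of the returned collection) is inherited verbatim from Lemma~\ref{lem:mcm}.

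The color-coding reduction is the crux of the adaptation: a length-$(2\ell+1)$ augmenting path has exactly $2\ell+2\le 1/\eps$ vertices, split into the $u$-side $A=\{u_1,\ldots,u_{\ell+1}\}$ and the $v$-side $B=\{v_0,\ldots,v_\ell\}$, which are disjoint because the path is simple (ensured inductively by the assumption that shorter augmenting paths have been eliminated). Applying Lemma~\ref{lem:color:cod} to these two sets produces some $Z\in\mathcal{Z}^{(n,1/\eps)}$ with $A\subseteq Z$ and $B\subseteq [n]\setminus Z$, so every short augmenting path is $Z$-consistent for at least one $Z$ in the family, and no short path can escape detection once we sweep through all of $\mathcal{Z}^{(n,1/\eps)}$.

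The running-time analysis is then immediate. The modified $\maximalaugpaths$ performs $O(\ell|M|)$ operations on $\DS_0^*$, for a cost of $\OO(\ell|M|\tau_0)$ per $Z$. Multiplying by $|\mathcal{Z}^{(n,1/\eps)}|=2^{O(1/\eps)}\log n$ subsets in the inner loop and $L=O(1/\eps)$ outer iterations, and using $|M|\le\OPT$, yields the claimed bound $\OO(2^{O(1/\eps)}\OPT\cdot\tau_0)$. Maintaining $M_0$ and re-initializing the $\DS_0^*$ state of $S,S_1,\ldots,S_\ell$ between $Z$'s is handled exactly as in Lemma~\ref{lem:mcm}.

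The main obstacle is correctness, specifically establishing the Hopcroft--Karp invariant in the non-bipartite setting. A single pass over $\mathcal{Z}^{(n,1/\eps)}$ does not obviously guarantee that no length-$(2\ell+1)$ augmenting path survives iteration $\ell$, because augmenting along a $Z$-consistent path can create a fresh short augmenting path that happens to be consistent with a $Z'$ processed earlier in the same sweep. I plan to resolve this by following the color-coded analog of the Lotker--Patt-Shamir--Pettie analysis~\cite{LotkerPP15}: any short augmenting path present at the end of iteration $\ell$ must have been created by some augmentation performed during that iteration (if it had been present at the time we processed its witnessing $Z$, maximality of the bipartite DFS would have eliminated it), so such ``creation events'' can be charged amortized to those augmentations, whose total count across all iterations is at most $\OPT$. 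Combined with the color-coding cover this ensures that the final matching is $(1+O(\eps))$-approximate while keeping the total work within the stated budget; verifying this charging carefully, and tying it back to the length invariant used in the Hopcroft--Karp approximation bound, is the technically delicate part of the proof.
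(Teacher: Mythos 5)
Your overall structure---color-coding to reduce the non-bipartite problem to bipartite subgraphs $G_Z$, with the DFS bookkeeping of Lemma~\ref{lem:mcm} inherited---matches the paper, but you introduce a complication the paper's proof deliberately avoids, and your attempted resolution is a genuine gap. You propose to augment $M$ \emph{during} the sweep over the sets $Z$, as each batch of paths is found, and you then correctly observe that an augmentation done while processing some $Z'$ may spawn a new short augmenting path whose witnessing $Z$ was already processed in the same sweep. Your proposed fix is a charging argument that you yourself flag as unverified, and it is not clear it recovers the exact Hopcroft--Karp length invariant (no augmenting path of length $\le 2\ell+1$ after iteration $\ell$) on which the final $(1+O(\eps))$ bound rests; the approximation ratio is derived from that invariant, not merely from bounding a total count of augmentations by $\OPT$.

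The paper's proof sidesteps the issue entirely by freezing $M$ for the whole $\ell$-th iteration: it sweeps over all $Z\in\ZZ^{(n,1/\eps)}$, accumulating augmenting paths into a single collection $\Gamma$, and for each new $Z$ it initializes $S,S_1,\ldots,S_\ell$ to exclude the vertices already used in $\Gamma$, so the newly found paths are vertex-disjoint from the earlier ones. Only after the full sweep over $\ZZ^{(n,1/\eps)}$ is $M$ augmented---once---along $\Gamma$. Since every length-$(2\ell+1)$ augmenting path for the frozen $M$ has $2\ell+2\le 1/\eps$ vertices, the color-coding cover places it inside some $G_Z$, so it must meet $\Gamma$ by the time that $Z$ is processed; hence $\Gamma$ is a maximal vertex-disjoint collection in $G$ with respect to $M$, and a single application of the Hopcroft--Karp lemma yields the length invariant. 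No charging is needed, and your running-time calculation carries over unchanged. To close the gap, restructure the inner loop in this accumulate-then-augment manner and drop the Lotker--Patt-Shamir--Pettie charging sketch.
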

\begin{proof}
As in the proof of Lemma~\ref{lem:mcm}, we iteratively maintain a current matching $M$,
and it suffices to describe how to find a maximal collection $\Gamma$ of vertex-disjoint augmenting paths
of length $2\ell+1$ in the $\ell$-th iteration, under the assumption that there are no augmenting paths of length
$\le 2\ell-1$.
However, the presence of odd-length cycles complicates the computation of $\Gamma$.

Initialize $\Gamma=\emptyset$.  We loop through each $Z\in \ZZ^{(n,1/\eps)}$ one by one and do the following.  
Let $G_Z$ be the subgraph of $G$ with edges $\{uv\in E: u\in Z,\, v\not\in Z\}$.
We find a maximal collection of vertex-disjoint augmenting paths of length $2\ell+1$ in $G_Z$ that
are vertex-disjoint from paths already selected to be in $\Gamma$; we then add this new collection to $\Gamma$.
Since $G_Z$ is bipartite, this step can be done using the $\maximalaugpaths$ procedure from the
proof of Lemma~\ref{lem:mcm}.  Since we are working with $G_Z$ instead of $G$, when we find neighbors of a given vertex,
they are now restricted to be in $Z$ if the given vertex is in $[n]\setminus Z$, or vice versa; the data structure
$\DS_0^*$ allows for such queries.
The only other change is that when we initialize $S,S_1,\ldots,S_\ell$, we should remove vertices that have 
appeared in paths already selected to be in~$\Gamma$.  

Assume $2\ell+2 \le 1/\eps$.
We claim that after looping through all $Z\in \ZZ^{(n,1/\eps)}$, the resulting collection $\Gamma$ of vertex-disjoint
length-$(2\ell+1)$ augmenting paths is maximal in $G$.  To see this, consider any length-$(2\ell+1)$ augmenting path
$v_0u_1v_1\cdots u_\ell v_\ell u_{\ell+1}$  in $G$.  There exists $Z\in \ZZ^{(n,1/\eps)}$ such that
$u_1,\ldots,u_{\ell+1}\in Z$ and $v_0,\ldots,v_\ell\not\in Z$.
Thus, the path must intersect some path in $\Gamma$ during the iteration when we consider $Z$.
\end{proof}

We can now obtain a non-bipartite analog of Theorem~\ref{thm:main:mcm}:

\begin{theorem}\label{thm:main:mcm:gen}
Let $\CC$ be a class of geometric objects, where there is
a dynamic data structure $\DS_0$ for $n$ objects in $\CC$ that
can find an object 
intersecting a query object (if exists), and supports insertions and deletions of objects, with $O(\tau_0(n))$ query and update time.

Then there is a dynamic data structure for $O(n)$ objects in $\CC$ that maintains a $(1+O(\eps))$-approximation
of the maximum-cardinality matching in the intersection graph, under insertions and deletions of objects,
in $\OO(2^{O(1/\eps)} \tau_0(n))$ amortized time.
\end{theorem}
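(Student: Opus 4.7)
The plan is to mirror the proof of Theorem~\ref{thm:main:mcm}, substituting Lemma~\ref{lem:mcm:gen} for Lemma~\ref{lem:mcm}, and converting the given oracle $\DS_0$ into the stronger oracle $\DS_0^*$ that Lemma~\ref{lem:mcm:gen} demands. First I would precompute the color-coding family $\ZZ^{(n,1/\eps)}$ once, in $O(2^{O(1/\eps)} n \log n)$ time by Lemma~\ref{lem:color:cod}. Then, for each $Z\in\ZZ^{(n,1/\eps)}$, I would maintain two independent copies of $\DS_0$: one storing $S\cap Z$ and another storing $S\setminus Z$. An insert/delete of a vertex $v$ touches, for each $Z$, exactly one of the two copies (according to whether $v\in Z$), so the amortized update time of $\DS_0^*$ is $|\ZZ^{(n,1/\eps)}|\cdot O(\tau_0(n)) = O(2^{O(1/\eps)}\log n \cdot \tau_0(n))$, while a neighbor-in-$S\cap Z$ or neighbor-in-$S\setminus Z$ query hits only one copy and runs in $O(\tau_0(n))$ time.

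Next I would apply the periodic-rebuilding template from Theorem~\ref{thm:main:mcm}. Guess a parameter $b$ with $b\le\OPT<2b$, and divide the update stream into phases of $\eps b$ updates. In parallel with the updates, maintain a maximal matching $M_0$ exactly as in the proof of Theorem~\ref{thm:main:mcm}, using a single instance of $\DS_0$ at cost $O(\tau_0(n))$ per update (the insertion/deletion procedure for $M_0$ never inspects bipartite structure, so it carries over verbatim). At the beginning of each phase, invoke Lemma~\ref{lem:mcm:gen} with the current $M_0$ and the oracle $\DS_0^*$; this costs $\OO(2^{O(1/\eps)}\cdot b \cdot O(2^{O(1/\eps)}\log n\cdot \tau_0(n))) = \OO(2^{O(1/\eps)}\cdot b \cdot \tau_0(n))$, since the extra $2^{O(1/\eps)}\log n$ factor from $\tau_0^*$ is absorbed into the existing $2^{O(1/\eps)}$ factor and into the $\OO$ notation. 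Amortizing over the $\eps b$ updates of a phase gives $\OO(2^{O(1/\eps)}\tau_0(n))$ per update, as desired.

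Within a phase, I would handle an insertion by doing nothing and a deletion by removing its incident matching edge (if any); the resulting matching remains a matching and suffers an additive error of at most $O(\eps b)=O(\eps\,\OPT)$, which the $(1+O(\eps))$ approximation ratio absorbs. As in Theorem~\ref{thm:main:mcm}, I would note that the $\DS_0^*$ copies should not be rebuilt at each phase boundary: they persist across phases, and at the end of each call to Lemma~\ref{lem:mcm:gen} we reset only the modified entries of $S$ back to $V$, which costs only $\OO(2^{O(1/\eps)}\OPT)$ extra operations. To remove the guess $b$, run one instance of the entire scheme for every $b$ that is a power of $2$, each with a running-time cap proportional to $b$, and return the best answer at any time.

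The main obstacle is keeping the construction of $\DS_0^*$ from inflating the update bound beyond $\OO(2^{O(1/\eps)}\tau_0(n))$: the decisive observation is that Lemma~\ref{lem:color:cod} provides a family of only $2^{O(1/\eps)}\log n$ subsets, so the cost of propagating each update to all $2|\ZZ^{(n,1/\eps)}|$ copies of $\DS_0$ merges cleanly with the $2^{O(1/\eps)}$ factor already present in Lemma~\ref{lem:mcm:gen}, and no additional idea (e.g.\ sparsification or layering over the color classes) is required.
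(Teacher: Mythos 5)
Your proposal matches the paper's proof of Theorem~\ref{thm:main:mcm:gen} essentially verbatim: both instantiate $\DS_0^*$ by keeping, for every $Z\in\ZZ^{(n,1/\eps)}$, two copies of $\DS_0$ (for $S\cap Z$ and $S\setminus Z$), observe that this multiplies the cost by $O(2^{O(1/\eps)}\log n)$ which is absorbed into the stated bound, and then run the periodic-rebuilding template of Theorem~\ref{thm:main:mcm} with Lemma~\ref{lem:mcm:gen} in place of Lemma~\ref{lem:mcm}, maintaining $M_0$ across updates and guessing $\OPT$ by powers of two.

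The one detail you leave out, which the paper spells out, is how to keep the color-coding family meaningful under a dynamic vertex set: Lemma~\ref{lem:mcm:gen} requires $V=[n]$ and $\ZZ^{(n,1/\eps)}$ is a family of subsets of $[n]$, but objects are being inserted and deleted, and $n$ is not fixed. The paper handles this with standard global rebuilding: assign each newly inserted object the next available label in $[n]$, and when the object count exceeds $n$ (or drops below $n/4$), double (or halve) $n$ and rebuild everything---including recomputing $\ZZ^{(n,1/\eps)}$ and reinitializing all $2^{O(1/\eps)}\log n$ copies of $\DS_0$---from scratch, which amortizes to the same bound. Your write-up says ``precompute $\ZZ^{(n,1/\eps)}$ once,'' which is only valid against a fixed universe; adding this one sentence of bookkeeping closes the gap and makes your argument match the paper exactly.
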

\begin{proof}
This is similar to the proof of Theorem~\ref{thm:main:mcm}, with Lemma~\ref{lem:mcm:gen} replacing
Lemma~\ref{lem:mcm}.  The only difference is that to support the data structure $\DS_0^*$, we maintain
$O(2^{O(1/\eps)}\log n)$ parallel instances of the data structure $\DS_0$ for $S\cap Z$ and
$S\setminus Z$, for every $Z\in \ZZ^{(n,1/\eps)}$.
This increases the update time by a factor of $O(2^{O(1/\eps)}\log n)$.

We have assumed that the input objects are labeled by integers in $[n]$.  When a new object is inserted, we can just
assign it the next available label in $[n]$.  When the number of objects exceeds $n$, we double $n$ and rebuild
the entire data structure from scratch.  Similarly, when the number of objects is below $n/4$, we halve $n$ and rebuild.
\end{proof}

\begin{corollary}\label{cor:mcm:disk2}
There is a dynamic data structure for $n$ disks in $\R^2$
that maintains a $(1+O(\eps))$-approximation of the maximum-cardinality matching
in the intersection graph, under insertions and deletions,
in $O(2^{O(1/\eps)}\log^{O(1)}n)$ amortized time.  
\end{corollary}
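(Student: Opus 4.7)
The plan is to derive Corollary~\ref{cor:mcm:disk2} as an essentially immediate application of Theorem~\ref{thm:main:mcm:gen} to the class $\CC$ of disks in $\R^2$, so the work reduces to verifying the single hypothesis of that theorem: that there is a dynamic data structure $\DS_0$ for intersection detection on disks in $\R^2$ with polylogarithmic amortized query and update time, which moreover returns a witness (an intersecting input disk) whenever one exists.

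First I would recall the discussion preceding Corollary~\ref{disk:MVC}: intersection detection among disks in $\R^2$ reduces to additively weighted Euclidean nearest neighbor search, where the weights are the radii. Kaplan et al.~\cite{KaplanKKKMRS22}, building on Chan's dynamic nearest neighbor structure~\cite{Chan20a,Chan01}, give a data structure with polylogarithmic amortized update time and query time. As noted in Section~\ref{sec:specific}, this yields $\tau_0(n)=O(\log^{O(1)} n)$ for the decision version, and as remarked just before Corollary~\ref{cor:mcm:disk}, standard modifications allow the structure to \emph{report} a witness intersecting disk in the same time bound. This is precisely the input that Theorem~\ref{thm:main:mcm:gen} requires.

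Next I would plug $\tau_0(n)=\log^{O(1)} n$ into the bound $\OO(2^{O(1/\eps)}\tau_0(n))$ from Theorem~\ref{thm:main:mcm:gen}. Absorbing the $\OO(\cdot)$ notation (which hides polylogarithmic factors) and the additional $O(2^{O(1/\eps)}\log n)$ overhead from maintaining parallel $\DS_0$ instances (one for each set $Z\in\ZZ^{(n,1/\eps)}$) into the exponent's constant, the overall amortized update time becomes $O(2^{O(1/\eps)}\log^{O(1)} n)$, which matches the claimed bound. The approximation guarantee $(1+O(\eps))$ is inherited directly from the theorem.

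There is no real obstacle: all the machinery (the color-coding reduction from general to bipartite matching, the modified Hopcroft--Karp procedure, the lazy periodic-rebuild scheme, and the geometric intersection-detection structure) is already in place. The only minor point worth double-checking is the rebuilding mechanism for the label space $[n]$ used by Theorem~\ref{thm:main:mcm:gen}: when the current number of disks crosses a power-of-two threshold, we rebuild from scratch, which costs $O(n\cdot 2^{O(1/\eps)}\log^{O(1)}n)$ but is amortized over $\Theta(n)$ updates and hence preserves the per-update bound.
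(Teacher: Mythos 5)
Your proof is correct and matches the paper's approach: the corollary is read off directly from Theorem~\ref{thm:main:mcm:gen} after observing that dynamic intersection detection for disks has $\tau_0(n)=O(\log^{O(1)}n)$ via the additively weighted nearest-neighbor structure of Kaplan et al.~\cite{KaplanKKKMRS22}, with a witness-reporting modification. Your additional remark about amortizing the rebuild when the label space $[n]$ doubles or halves is a reasonable sanity check that the paper already handles inside the proof of Theorem~\ref{thm:main:mcm:gen}.
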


\begin{corollary}\label{cor:mcm:box2}
There is a dynamic data structure for $n$ axis-aligned boxes in $\R^d$ for any constant $d$
that maintains a $(1+O(\eps))$-approximation of the maximum-cardinality matching
in the intersection graph, under insertions and deletions,
in $O(2^{O(1/\eps)}\log^{O(1)}n)$ amortized time.  
\end{corollary}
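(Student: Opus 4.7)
The plan is to obtain this as a direct instantiation of Theorem~\ref{thm:main:mcm:gen}, so the only task is to supply the required dynamic oracle $\DS_0$ for axis-aligned boxes in $\R^d$ and to verify that its update/query time is polylogarithmic.

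First, I would observe that two axis-aligned boxes $B=\prod_{i=1}^d[a_i,b_i]$ and $B'=\prod_{i=1}^d[a'_i,b'_i]$ in $\R^d$ intersect if and only if $a_i\le b'_i$ and $a'_i\le b_i$ for every coordinate $i$. This is a conjunction of $2d$ one-sided inequalities on the $2d$ endpoint coordinates of the stored boxes, so intersection with a query box reduces to a single orthogonal range search in a $2d$-dimensional space of endpoint tuples. Using a standard dynamic $2d$-dimensional range tree (as in~\cite{preparata2012computational,AgarwalE99}) equipped with ``report-any'' capability at each leaf level, we obtain a dynamic data structure that, given a query box, returns one input box intersecting it (or reports none) in $O(\log^{2d-1} n)$ time per query and per insertion/deletion (with the usual polylogarithmic amortization, if BB[$\alpha$]-trees are used). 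This gives $\tau_0(n)=O(\log^{O(1)}n)$, where the exponent depends only on the constant $d$.

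Second, I would plug this into Theorem~\ref{thm:main:mcm:gen}, which yields a dynamic data structure maintaining a $(1+O(\eps))$-approximation of the maximum-cardinality matching of the intersection graph of $O(n)$ boxes in $\R^d$, under insertions and deletions, in $\OO(2^{O(1/\eps)}\tau_0(n)) = O(2^{O(1/\eps)}\log^{O(1)}n)$ amortized time. The $\OO(\cdot)$ only hides additional polylogarithmic factors coming from the multi-level data structure used to maintain the $O(2^{O(1/\eps)}\log n)$ parallel copies indexed by $Z\in \ZZ^{(n,1/\eps)}$ inside Lemma~\ref{lem:mcm:gen}, and from the rebuilds when the label range $[n]$ needs to be doubled or halved in the proof of Theorem~\ref{thm:main:mcm:gen}; all of these are absorbed into the $\log^{O(1)}n$ factor.

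There is essentially no obstacle to overcome in this final step: both ingredients (dynamic orthogonal range searching for boxes, and the generic reduction in Theorem~\ref{thm:main:mcm:gen}) are already in place. The only point worth a brief sanity check is that the oracle required in Lemma~\ref{lem:mcm:gen} is really $\DS_0^*$, which must answer, given a vertex $u$ and a set $Z\in \ZZ^{(n,1/\eps)}$, whether there is a neighbor of $u$ inside $S\cap Z$ and one inside $S\setminus Z$. As noted in the preceding Specific-Results subsection, this is handled in Theorem~\ref{thm:main:mcm:gen} by maintaining $O(2^{O(1/\eps)}\log n)$ parallel instances of $\DS_0$, one for each relevant subset; since each instance is a box-range-tree with polylogarithmic update time, the total overhead stays within the claimed $O(2^{O(1/\eps)}\log^{O(1)}n)$ bound, completing the proof.
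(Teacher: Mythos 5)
Your proposal is correct and takes essentially the same route as the paper: the corollary follows immediately from Theorem~\ref{thm:main:mcm:gen} together with the observation (already made in the paper for the MVC and bipartite MCM cases) that dynamic intersection detection for axis-aligned boxes is handled by $2d$-dimensional orthogonal range searching in $\tau_0(n)=O(\log^{O(1)}n)$ amortized time. Your extra details (the reduction of box-intersection to a $2d$-dimensional dominance query, the report-any variant, and the bookkeeping for the parallel $\DS_0$ copies and label rebuilds) are all consistent with what the paper implicitly relies on.
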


}

\ARXIV{\GENERALMATCHINGSECTION}
\ICALP{
In Appendix~\ref{sec:gen:match}, we adapt our results for bipartite MCM to general MCM.
}

\section{Conclusion}

In this paper, we studied two fundamental graph optimization problems, MVC and MCM, in fully dynamic geometric settings. For both of the problems, we developed general frameworks based on the dynamic intersection detection data structures and static approximation algorithms, which are necessary ingredients for designing dynamic algorithms for geometric intersection graphs. These frameworks allowed us to obtain dynamic algorithms for a wide range of geometric intersection graphs, e.g., disks/rectangles in $\mathbb{R}^2$, hypercubes in $\mathbb{R}^d$, etc. Moreover, our results extend to the bipartite versions of the problems as well. 
In this work, we primarily focused on the unweighted case. Our approach does not readily extend to the weighted setting. Obtaining efficient dynamic algorithms for both MVC and MCM in the weighted setting are interesting open problems. 






\old{
\IGNORE{


 $\{(A_i,B_i)\}_{i=1}^\ell$  is a biclique cover if $E=\bigcup_{i=1}^\ell (A_i\otimes B_i)$.

(for rectangles, there exists biclique cover of near linear size)

\begin{lemma}
Given a biclique cover $\{(A_i,B_i)\}_{i=1}^\ell$ of a graph $G=(V,E)$ with $n=|V|$ and $M=\sum_{i=1}^\ell (|A_i|+|B_i|)$,
we can compute a $(1+\delta)$-approximation to the minimum fractional vertex cover
in $\OO((1/\delta)^2(n+M))$ time.
\end{lemma}
\begin{proof}


Consider the following LP:
\[ \begin{array}{llllll}
\mbox{maximize} & \sum_{v\in V} x_v &&  &&\\
\mbox{s.t.} 
& x_u + y_i &\ge& 1  && \forall i,\ \forall u\in A_i \\
& x_v + z_i &\ge& 1 && \forall i,\ \forall v\in B_i\\
& y_i + z_i &\le& 1 && \forall i\\
& x_v, y_i,z_i &\in& [0,1] && \forall i,\ \forall v\in V\\
\end{array} \]

If $(x_v)_{v\in V}$ is a fractional vertex cover,
we can set $y_i=\min_{v\in B_i} x_v$ and $z_i=1-y_i$ for each $i$,
to get a feasible solution to the above LP\@.
Conversely, if $(x_v)_{v\in V},(y_i)_{i=1}^\ell,(z_i)_{i=1}^\ell$ form a feasible solution to the LP,
then $(x_v)_{v\in V}$ is a fractional vertex cover 
(since for every $(u,v)\in A_i\times B_i$, we have $x_u+x_v\ge (1-y_i) + (1-z_i)\ge 1$).

The above is a mixed packing/covering LP, with $O(n+M)$ nonzeros in the constraint matrix,
and so we can apply a known MWU-based algorithm by Young~\cite{??} to compute  a ``$(1+\delta)$-approximate'' solution
in $\OO((1/\delta)^2(n+M))$ time.  More precisely, if the LP has optimal value at most $k$,
it finds a solution $(x_v)_{v\in V},(y_i)_{i=1}^\ell,(z_i)_{i=1}^\ell$ satisfying
\[ \begin{array}{lllll}
\sum_{v\in V} x_v&\le& k(1+\delta)  &&\\
x_u + y_i &\ge& 1  && \forall i,\ \forall u\in A_i \\
x_v + z_i &\ge& 1 && \forall i,\ \forall v\in B_i\\
y_i + z_i &\le& 1+\delta && \forall i\\
x_v, y_i,z_i &\in& [0,1+\delta] && \forall i,\ \forall v\in V
\end{array} \]
W.l.o.g., we may assume that $x_v, y_i,z_i\in [0,1]$ (by replacing numbers bigger than 1 with 1).
We define a modified solution  $(x_v')_{v\in V},(y_i')_{i=1}^\ell,(z_i')_{i=1}^\ell$,
where $x_v'=\min\{x_v+\delta\cdot 1_{[x_v\ge 1/3]},\, 1\}$,
$y_i'=\max\{y_i - \delta\cdot 1_{[y_i\le 2/3]},\, 0\}$, and
$z_i'=\max\{z_i - \delta\cdot 1_{[z_i\le 2/3]},\, 0\}$ (where $1_{[E]}$ denotes 1 if $E$ is true and 0 otherwise).
It is easy to verify that $x_u'+y_i'\ge 1$ for all $u\in A_i$, and $x_v'+z_i'\ge 1$ for all $v\in B_i$,
and $y_i'+z_i'\le 1$ for all $i$ (assuming $\delta<1/3$).
Furthermore, $\sum_{v\in V} x_v' \le (1+3\delta)\sum_{v\in V} x_v \le (1+O(\delta))k$.
Thus, if there exists a fractional vertex cover of size at most $k$,
we can find a fractional vertex cover of size at most $(1+O(\delta))k$.
\end{proof}

Remark: alternative: minimum fractional vertex cover in a general graph reduces to minimum vertex cover in a bipartite graph;
which by duality reduces to maximum matching in a bipartite graph (at least in the exact case);
which reduces to maximum flow in a 3-layer graph with $O(n+M)$ edges and unit capacities by Feder and Motwani;
can use recent near-linear algorithm, with $n^{o(1)}$ factors.
(conversion from maximum matching to minimum vertex cover also needs BFS in residual graph, and can also
be done using biclique cover...)

(there are simpler approximation algorithms for maximum flow with constant number of layers;
but not clear how to convert approximate maximum matching to approximate minimum vertex cover in bipartite graphs)

}
}


\bibliographystyle{alphaurl}
\bibliography{main.bib}

\newpage

\appendix

\section{Minimum Vertex Cover: Fast Static Algorithms}

\subsection{Disks in $\R^2$}\label{app:disks}

In this subsection,
we give a static, near-linear-time, $(1+O(\eps))$-approximation algorithm $\AAA$ for MVC for disks under the promise that $n=O(\OPT)$.
As we are aiming for a $(1+O(\eps))$-factor approximation, 
we may tolerate an additive error of $O(\eps n)$
because of the promise.  MVC with $O(\eps n)$ additive error reduces
to MIS with $O(\eps n)$ additive error, by complementing the solution.

{\sf PTAS}s are known for MIS for disks, but allowing $O(\eps n)$ additive error, there are
actually {\sf EPTAS}s that run in near linear time.  For example, we can adapt an approach by Chan~\cite{Chan03}
based on divide-and-conquer via separators.
Specifically, we can use the following variant of Smith and Wormald's geometric separator theorem~\cite{SmithW98}:

\begin{lemma}[Smith and Wormald's separator]\label{lem:sep}
Given $n$ fat objects in a constant dimension $d$, there is an axis-aligned hypercube $B$, such that the
number of objects inside $B$ and the number of objects outside $B$ are both at most $(1-\beta)n$ for some constant $\beta>0$
(dependent on $d$),
and the objects intersecting $\partial B$ can be stabbed by $O(n^{1-1/d})$ points.  Furthermore, $B$ can be constructed in $O(n)$ time.
\end{lemma}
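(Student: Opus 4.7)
My plan is to adapt the classical Smith--Wormald separator argument~\cite{SmithW98}, refined to produce an axis-aligned hypercube together with a small piercing set for the crossing objects. I would proceed in three phases.

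First, for the balance property (the inside and outside counts), I would choose a cube $B$ by a simple sweep. Specifically, I would center an axis-aligned cube at the coordinate-wise median of the object centers, which is computable in $O(n)$ time via linear-time selection, and then grow its side length $s$ continuously. As $s$ ranges from $0$ to $\infty$, the number of objects whose centers lie inside $B$ increases monotonically from $0$ to $n$, so some $s^*$ produces at most $n/2$ inside and at most $n/2$ outside. Fatness then converts the bound on centers into a bound on containment of entire objects, with only a constant-factor loss in $\beta$.

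Second, to bound the piercing number of objects crossing $\partial B$, I would cover $\partial B$ by a regular grid of $O(n^{(d-1)/d}) = O(n^{1-1/d})$ cells of diameter $\Theta(s/n^{1/d})$. By fatness, any crossing object whose diameter is at least this grid spacing contains a ball of comparable radius, so it is pierced by any single point placed in the cell meeting that ball; hence $O(n^{1-1/d})$ grid points pierce every ``large'' crossing object. For the ``small'' crossing objects (diameter below the grid spacing), I would use a multi-scale averaging argument: consider $L = \Theta(\log n)$ geometrically spaced candidate side lengths all centered at $p$, each yielding a cube satisfying the balance property; each fat object is ``small'' at only $O(1)$ of these scales (those for which $s/n^{1/d}$ is comparable to the object's diameter), so by pigeonhole some scale has $O(n/\log n)$ small crossing objects, and a sharper volumetric packing argument that exploits fatness strengthens this to $O(n^{1-1/d})$.

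The hard step, in my view, will be coordinating the balance property and the piercing bound on the \emph{same} cube. My plan is to decouple them: first lock in the center $p$ so that the balance property holds across a wide range of side lengths, and then search within that range---only $O(\log n)$ candidate scales are needed---for the side length minimizing the piercing number, via a single $O(n)$-time counting sweep that tallies, for each candidate, how many objects it contains and how many it crosses. Constructing the $O(n^{1-1/d})$-point piercing set on the chosen $\partial B$ is then immediate from the grid description, giving the overall $O(n)$ construction time claimed in the lemma.
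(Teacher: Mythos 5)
Your high-level skeleton---pick a balanced hypercube, pierce ``large'' crossing objects with an $O(n^{1-1/d})$-point grid, and show the ``small'' crossing objects number only $O(n^{1-1/d})$ by averaging over candidate cube sizes---is the correct Smith--Wormald skeleton, but two of your steps do not deliver the claimed bounds, and the gaps are not cosmetic.

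The crux is the averaging for the small crossing objects. You take $L=\Theta(\log n)$ geometrically spaced scales, argue each object is a small crossing object at $O(1)$ of them, and conclude by pigeonhole that some scale has $O(n/L)=O(n/\log n)$ small crossings. But $n/\log n$ is polynomially larger than $n^{1-1/d}$ for every $d\ge 2$, and the ``sharper volumetric packing argument that exploits fatness'' you invoke to close this polynomial gap is not supplied and does not obviously exist (the small objects can overlap arbitrarily, so volume packing does not bound their count). The paper, following Smith--Wormald, uses $h=n^{1/d}$ \emph{arithmetically} spaced shells $B_t$, $t\in\{\frac{1}{h},\ldots,\frac{h-1}{h}\}$, of a reference cube $B_0$ of side $r$, all within a factor $2$ of each other. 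An object of diameter $\le r/h$ then meets $\partial B_t$ for only $O(1)$ of these $h$ shells, so averaging directly yields $O(n/h)=O(n^{1-1/d})$ small crossings at some $t$. Polynomially many shells, not logarithmically many, is essential.

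Second, your decoupling plan requires balance to hold simultaneously across your whole candidate range of side lengths, but centering at the coordinate-wise median only certifies balance at one side length $s^*$. Nothing stops nearly all reference points from lying in the annulus between $B_{s^*}$ and $B_{2s^*}$ (e.g., all on a common sphere about $p$), in which case every cube in your range above $s^*$ contains essentially everything. The paper's fix is to choose $B_0$ as the \emph{smallest} hypercube containing at least $\frac{n}{2^d+1}$ reference points (not a cube centered at a prescribed point). Then any scaled copy $B_t$ of side length $<2r$ is covered by $2^d$ quadtree boxes of side $<r$, each of which, by minimality of $B_0$, contains fewer than $\frac{n}{2^d+1}$ reference points; hence every $B_t$ contains at most $\frac{2^d}{2^d+1}n$ of them. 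This minimality device is what makes balance hold uniformly across the scaling range, and it is the ingredient your proposal is missing. (It is also what makes the $O(n)$ construction delicate; the paper rounds reference points to a coarse $C\times\cdots\times C$ grid so that the minimizing cube can be found among a constant number of rounded candidates.)

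A smaller point: for the ``large'' crossing objects you assert that the fat inscribed ball meets a grid cell on $\partial B$, but fatness only promises a ball of radius $\Omega(\mathrm{diam})$ \emph{somewhere} inside the object, possibly deep in the interior and far from $\partial B$. The standard repair thickens the grid into an $O(\delta)$-shell around $\partial B$ and uses convexity (the cone from a contact point $q\in o\cap\partial B$ to the inscribed ball) to exhibit a ball of radius $\Omega(\delta)$ inside the object within distance $O(\delta)$ of $q$. The point count is unchanged, but as written your justification is a non sequitur.
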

\begin{proof}
(The following is a modification of a proof in~\cite{ChanH24}, which is based on Smith and Wormald's original proof~\cite{SmithW98}.)

For each object, first pick an arbitrary ``reference'' point inside the object.
Let $B_0$ be the smallest hypercube that contains at least $\frac{n}{2^d+1}$ 
reference points.  Let $r$ be the side length of $B_0$.
Let $h$ be a parameter to be set later.
For each $t\in\{\frac 1h, \frac2h,\ldots, \frac{h-1}{h}\}$,
let $B_t$ be the hypercube with a scaled copy of $B_0$
with the same center and side length $(1+t)r$.
Since $B_t$ can be covered by $2^d$ quadtree boxes of side length $<r$,
the number of reference points inside $B_t$ is at most $\frac{2^d n}{2^d+1}$.

An object of diameter $\le r/h$ intersects $\partial B_t$ for at most $O(1)$ choices of $t$.
Thus, we can find a value $t$ in $O(n)$ time such that $\partial B_t$ intersects $O(n/h)$ objects
of diameter $\le r/h$.  On the other hand, the objects of diameter $> r/h$ intersecting $\partial B_t$
can all be stabbed by $O(h^{d-1})$ points, because of fatness.  Set $h=n^{1/d}$, and $B_t$ satisfies the desired properties
with $\beta=\frac{2^d}{2^d+1}$.

One remaining issue is how to find $B_0$ quickly. 
Let $C$ be a sufficiently large constant.  Form a $C\times\cdots\times C$ grid, so that the number of reference points between any two consecutive parallel grid hyperplanes is $n/C$.  This takes $O(C n)$ time by 
a linear-time selection algorithm.  Round each reference point to its nearest grid point.  The rounded
point set is a multiset with $O(C^d)$ distinct elements.
Redefine $B_0$ as the smallest hypercube that contains at most $\frac{n}{2^d+1}$ rounded reference points (multiplicity included),
which can now be computed in $C^{O(1)}$ time.  For any box, the number of reference points inside the box
changes by at most $O(n/C)$ after rounding.  So, we just need to increase $\beta$ by $O(1/C)$.
\IGNORE{
For each object, first pick a ``reference'' point inside the object.
Compute the smallest quadtree box $B^*$ (a hypercube) that contains at least $2^d\beta n$ 
reference points; this can be found in $\OO(n)$ time by constructing and traversing the
(compressed) quadtree for the $n$ reference points.
Now, $B^*$ has $2^d$ child boxes (of half the diameter), and one of the child boxes, denoted by
$B_0$, contains at least $\beta n$ reference points.  Let $r$ be the diameter of $B_0$.
Let $b$ be a parameter to be set later.
For $t\in\{\frac 1b, \frac2b,\ldots, \frac{b-1}{b}$,
let $B_t$ be the hypercube with a scaled copy of $B_0$
with the same center and diameter $(1+t)r$.
Since $B_t$ can be covered by $3^d$ quadtree boxes congruent to $B_0$,
we see that the number of reference points inside $B_t$ is $6^d\beta n=(1-\beta)n$,
by choosing $\beta_d=1/(6^d+1)$.

An object of diameter $\le r/b$ intersects $\partial B_t$ for at most $O(1)$ choices of $t$.
Thus, we can find some $t$ in $O(n)$ time such that $\partial B_t$ intersects $O(n/b)$ objects
of diameter $\le r/b$.  On the other hand, the objects of diameter $> r/b$ intersecting $\partial B_t$
can be stabbed by $O(b^{d-1})$ points because of fatness.  Set $b=n^{1/d}$, and $B_t$ satisfies the desired properties.
}    
\end{proof}

The MIS algorithm is simple: we just recursively compute an independent set for the disks inside $B$
and an independent set for the disks outside $B$, and take the union of the two sets.
If $n$ is below a constant $b$, we solve the problem exactly by brute force in $2^{O(C)}$ time.
(This is analogous to Lipton and Tarjan's original {\sf EPTAS} for independent set for planar graphs~\cite{LiptonT77}.)

Since the optimal independent set can have at most $O(\sqrt{n})$
disks intersecting $\partial B$, the total additive error satisfies a recurrence of the form
\[ E(n) \le  \left\{\begin{array}{ll}
 \max_{n_1,n_2\le (1-\beta)n:\ n_1+n_2\le n}(E(n_1)+E(n_2)+O(\sqrt{n})) & \mbox{if $n\ge b$}\\
 0 & \mbox{if $n<b$.}
 \end{array}\right.\]
This yields $E(n)=O(n/\sqrt{b})$.  We set $b=1/\eps^2$.
The running time of the resulting static algorithm $\AAA$ is $T(n)=\OO(2^{O(1/\eps^2)}n)$.

\subsection{Rectangles in $\R^2$}\label{app:rect}

In this subsection, we give a static, near-linear-time, $(\frac32+O(\eps))$-approximation algorithm for MVC for rectangles, by adapting Bar-Yehuda, Hermelin, and Rawitz's previous, polynomial-time, $(\frac32+O(\eps))$-approximation algorithm~\cite{bar2011minimum}.

\paragraph{Triangle-free case.}
We first start with the case when the intersection graph is triangle-free, or equivalently, when
the maximum depth of the rectangles is at most~2.  (The depth of a point $q$ among a set of rectangles refers to 
the number of rectangles containing $q$; the maximum depth refers to the maximum over all $q\in\R^2$.)

We will design a static algorithm $\AAA$ for MVC for rectangles, under 
the promise that $n\le (2+O(\eps))\,\OPT$.  


We say that a rectangle $r$ \emph{dominates} another rectangle $r'$ if
$\partial r$ intersects $\partial r'$ four times, with $r$ having larger height than $r'$.
Bar-Yehuda, Hermelin, and Rawitz's algorithm proceeds as follows:

\begin{enumerate}
\item Decompose $R$ into 2 subsets $R_1$ and $R_2$, such that there are no dominating pairs within each subset~$R_i$.

The existence of such a decomposition follows easily from Dilworth's theorem, but for an explicit construction,
 we can just define $R_1$ to contain all rectangles in $R$ that are not dominated
by any other rectangle in $R$, and define $R_2$ to be $R\setminus R_1$.
Correctness is easy to see (since the maximum depth is~2).

We can compute $R_1$ (and thus $R_2$) by performing $O(n)$ orthogonal range queries, after lifting each rectangle to 
a point in $\R^4$.  In fact, computing $R_1$ corresponds to the \emph{maxima} problem in $\R^4$, for which
$O(n\log n)$-time algorithms are known~\cite{ChanLP11}.

\item For each $i\in\{1,2\}$, compute a vertex cover $S_i$ of $R_i$ which approximates the minimum with additive
error at most $\eps n$.

Since we tolerate $\eps n$ additive error, it suffices to approximate the MIS of $R_i$
with $\eps n$ additive error.

Because $R_i$ has no dominating pairs, it forms a \emph{pseudo-disk} arrangement
(each pair intersects at most twice).
It is known~\cite{bar2011minimum, KedemLPS86} that the intersection graph of any set of pseudo-disks
that have maximum depth 2 and have no containment pair is in fact planar.
(We can easily eliminate containment pair, by removing rectangles that contain another rectangle;
and we can detect such rectangles again by orthogonal range queries.)
So, we can use known near-linear-time {\sf EPTAS} for MIS for planar graphs; for example, Lipton and Tarjan's
divide-and-conquer algorithm via planar-graph separators runs in $\OO(2^{O(1/\eps^2)}n)$ time~\cite{LiptonT77}.

\item Return $S$, the smaller of the two sets: $S_1\cup R_2$ and $S_2\cup R_1$.
\end{enumerate}

Let $S^*$ be the minimum vertex cover.
Then 
\begin{eqnarray*}
 |S| &\le& \tfrac{1}{2}(|S_1|+|R_2| + |S_2|+|R_1|)\\
 &\le& \tfrac{1}{2} (|S^*\cap R_1| + |S^*\cap R_2| + |R_1|+|R_2|) + \eps n
 \ =\ \tfrac{1}{2} (|S^*| + n) + \eps n\ \le\ (\tfrac32+O(\eps))|S^*|
\end{eqnarray*}
under the assumption that $n\le (2+O(\eps))|S^*|$.  The running time is
$T(n)=\OO(2^{O(1/\eps^2)}n)$.

At this point, if we apply Theorem~\ref{thm:main} with $c=\frac32+O(\eps)$, $\gamma=\eps$, and $\delta=\eps^3$ (and $\tau_0(n)=O(\log^{O(1)}n)$ as noted in Section~\ref{sec:specific}), we obtain:

\begin{corollary}\label{cor:rect:trifree}
There is a dynamic data structure for $n$ rectangles in $\R^2$
that maintains a $(\frac32+O(\eps))$-approximation of the minimum vertex cover
of the intersection graph, under insertions and deletions,
in $O(2^{O(1/\eps^2)}\log^{O(1)}n)$ amortized time, under the assumption that
the intersection graph is triangle-free.
\end{corollary}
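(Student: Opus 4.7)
The plan is a direct application of Theorem~\ref{thm:main}, plugging in the two oracles suggested by the preceding discussion. For oracle (i), dynamic intersection detection among axis-aligned rectangles in $\R^2$ can be supported in $\tau_0(n) = O(\log^{O(1)} n)$ amortized time using standard multi-level orthogonal range trees (as noted in Section~\ref{sec:specific}). For oracle (ii), I would use the static $(\frac32+O(\eps))$-approximation algorithm described just above the corollary: decompose into $R_1,R_2$ with no dominating pairs via a 4-dimensional maxima computation in $O(n\log n)$ time, and approximate MIS on each $R_i$ via Lipton--Tarjan divide-and-conquer on the resulting planar intersection graph, giving $T(n)=\OO(2^{O(1/\eps^2)}n)$ under the promise $n\le (2+O(\eps))\OPT$.

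The one subtlety I would highlight is that Theorem~\ref{thm:main} feeds the static algorithm $\AAA$ a kernel $G[K]$, not the whole graph, and $\AAA$ requires its input intersection graph to be triangle-free. Since $K\subseteq V$ and $G[K]$ is an \emph{induced} subgraph of the overall intersection graph, triangle-freeness is inherited for free from the global assumption of the corollary, so the static algorithm's precondition is met on every phase's input. The assumption also carries over through the lazy between-phase updates in Theorem~\ref{thm:main}'s proof: those updates only add/remove objects, and the dynamic invariant being maintained is simply ``the input set's intersection graph is triangle-free,'' which is a promise on the update sequence itself.

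With that dispatched, I would finish by choosing parameters $c = \frac32 + O(\eps)$, $\gamma=\eps$, and $\delta=\eps^3$ so that $c(1+O(\sqrt{\delta/\gamma}+\eps)) = \frac32+O(\eps)$, and substituting into the amortized time bound of Theorem~\ref{thm:main}, namely $\OO((1/(\delta^2\eps))\tau_0(n) + (1/\eps)T(n)/n)$. With $\tau_0(n)$ polylogarithmic and $T(n)/n = \OO(2^{O(1/\eps^2)})$, the update time collapses to $O(2^{O(1/\eps^2)}\log^{O(1)} n)$ amortized, exactly as claimed.

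The main obstacle is really conceptual rather than technical: one needs to verify that the triangle-free hypothesis is stable under all the reductions. The LP solved in Lemma~\ref{lem:mwu} never modifies the graph, the kernelization in Lemma~\ref{lem:kernel} passes to an induced subgraph, and the between-phase insert/delete bookkeeping in the proof of Theorem~\ref{thm:main} does not alter the intersection graph beyond the user's own update sequence. Hence the triangle-free promise is preserved throughout, and no new machinery beyond what is already in the excerpt is required.
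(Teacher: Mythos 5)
Your proposal is correct and follows the paper's own route exactly: apply Theorem~\ref{thm:main} with the static triangle-free algorithm as oracle (ii), orthogonal range searching for oracle (i), and the same parameter choices $c=\frac32+O(\eps)$, $\gamma=\eps$, $\delta=\eps^3$. Your extra check that triangle-freeness passes to the induced subgraph $G[K]$ and is preserved across phases is a sound observation that the paper leaves implicit.
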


\paragraph{General case.}
We now design our final static algorithm $\AAA$ for MVC for rectangles that avoids the triangle-free assumption, again by building on
Bar-Yehuda,  Hermelin, and Rawitz's approach:

\begin{enumerate}
\item Remove vertex-disjoint triangles $T_1,\ldots,T_\ell$ so that
the remaining set $R' := R\setminus (T_1\cup\cdots \cup T_\ell)$ is
triangle-free.

We can implement this step greedily in polynomial time, but a faster approach is via a plane sweep.
Namely, we modify the standard sweep-line algorithm to computing the maximum depth of $n$ rectangles in $\R^2$ (similar to Klee's measure problem)~\cite{preparata2012computational}.
The algorithm uses a data structure for a 1D problem: maintaining the maximum depth of intervals in $\R^1$,
subject to insertions and deletions of intervals.  Simple modification of standard search trees achieve
$O(\log n)$ time per insertion and deletion.  As we sweep a vertical line $\ell$ from left to right, we maintain
the maximum depth of the $y$-intervals of the rectangles intersected by $\ell$.  When $\ell$ hits
the left side of a rectangle, we insert an interval.  When $\ell$ hits the right side of a rectangle,
we delete an interval.  When the maximum depth becomes 3, we remove the 3 intervals containing the 
maximum-depth point, which correspond to a triangle in the intersection graph, and continue the sweep.
The total time is $O(n\log n)$.

\item Compute a vertex cover $S'$ of $R'$ which is a $(\frac32+O(\eps))$-approximation to the minimum.
This can be done by Corollary~\ref{cor:rect:trifree} in $\OO(2^{O(1/\eps^2)}n)$ time (we actually don't need
the full power of a dynamic data structure, since we just want a static algorithm for this step).

\item Return $S = S'\cup T$ with $T := T_1\cup\cdots\cup T_\ell$.
\end{enumerate}

Let $S^*$ be the minimum vertex cover.  
The key observation is that $S^*$ must contain at least 2 of the 3 vertices in each triangle $T_i$,
and so $|S^*\cap T|\ge \frac23 |T|$.
Thus,
\[ |S|\ =\ |S'|+|T|\ \le\ (\tfrac32+O(\eps))|S^*\cap R'| + \tfrac32|S^*\cap T|\ \le\ (\tfrac32+O(\eps))|S^*|.
\]
The running time of our final static algorithm $\AAA$ is $T(n)=\OO(2^{O(1/\eps^2)}n)$.


\ICALP{\GENERALMATCHINGSECTION}

\end{document}